%
%
%
%
\documentclass[reqno]{amsart}

\usepackage[top=1in, bottom=1in, left=1in, right=1in]{geometry}
\usepackage{tipa}
\usepackage{amssymb}
\usepackage{graphicx}
\usepackage{hyperref}
\usepackage{enumerate}
\usepackage{xcolor}
\usepackage{color}
\usepackage{mathrsfs}     
\usepackage{dsfont}
\usepackage{makecell}    
\usepackage[format=plain,justification=centering,singlelinecheck=false,]{caption}   
\usepackage[FIGTOPCAP]{subfigure} 

\usepackage{amsmath}    
\allowdisplaybreaks[4]

\def\be{\begin{equation}}
\def\ee{\end{equation}}



\newtheorem{theorem}{Theorem}[section]
\newtheorem{lemma}[theorem]{Lemma}
\newtheorem{prop}[theorem]{Proposition}

\newtheorem{remark}{Remark}[section]

\numberwithin{equation}{section}

\newcommand{\abs}[1]{\left|#1\right|}                     



%
%
%

\begin{document}

\title[Pricing VIX Derivatives With Free Stochastic Volatility Model]{Pricing VIX Derivatives With Free Stochastic Volatility Model}

\author[W. Lin]{Wei Lin}
\address{School of Mathematical Sciences , Zhejiang University, Hangzhou, 310027, People's Republic of China}
\email{weilin1991@zju.edu.cn; mathslin@126.com}
\thanks{}

\author[S. H. Li]{Shenghong Li}
\address{School of Mathematical Sciences , Zhejiang University, Hangzhou, 310027, People's Republic of China}
\email{shli@zju.edu.cn}

\author[S. Chern]{Shane Chern}
\address{School of Mathematical Sciences , Zhejiang University, Hangzhou, 310027, People's Republic of China}
\email{shanechern@zju.edu.cn; chenxiaohang92@gmail.com}

\subjclass[2010]{91G20}

\date{}

\dedicatory{}

\keywords{Free Stochastic volatility; Jumps; VIX derivatives;}

\begin{abstract}
%
In this paper, we relax the power parameter of instantaneous variance and develop a free stochastic volatility plus jumps model that generalise the Heston model and 3/2 model as special cases. This model has two distinctive features. First, we do not restrict the newly parameter, letting the data speak as to its direction. The Generalized Methods of Moments suggests that the newly added parameter is to create varying volatility fluctuation in different period discovered in financial market. Second, upward and downward jumps are separately modeled to accommodate the market data. Our model is novel and highly tractable, which means that the quasi-closed-form solutions for futures and options prices can be derived. We have employed data on VIX future and corresponding option contracts to test this model to evaluate its ability of performing pricing and capturing features of the implied volatility. To sum up, the free stochastic volatility model with asymmetric jumps is able to adequately capture implied volatility dynamics and thus it can be seen as a superior model relative to the fixed volatility model in pricing VIX derivatives.
\end{abstract}

\maketitle

\section{Introduction}\label{sec1}
Since the Chicago Board Options Exchange (CBOE) launched the CBOE Volatility Index (VIX) futures in March 2004 and later VIX options in February 2006 and a great deal of financial innovation in volatility had been traded on markets over the past few years, the trading volume of derivatives on the VIX index has grown considerably over the last decade and become popular among investors. The increasing volume of trading in those products is largely due to the fact that VIX options give investors the possibility to directly and effectively invest in volatility without having to factor in the price changes of the underlying instrument, dividends, interest rates or time to expiration. Moreover, VIX derivatives can serve as an effective hedging instrument against financial turmoil. In fact, the index is regarded as the “fear gauge” since the VIX index tends to rise when large price movement and market turmoil occur, whereas when the market is easing upward in a long-run bull market, the VIX index remains low and steady. 

\begin{figure}[htpb]
\caption{\\Plot of the VIX index against S\&P500 and VVIX(01/02/2015--1/29/2016)}
\label{Figure 1}
\subfigure[VVIX v.s. VIX]{\label{VVIX v.s. VIX}\includegraphics[width=0.31\textwidth]{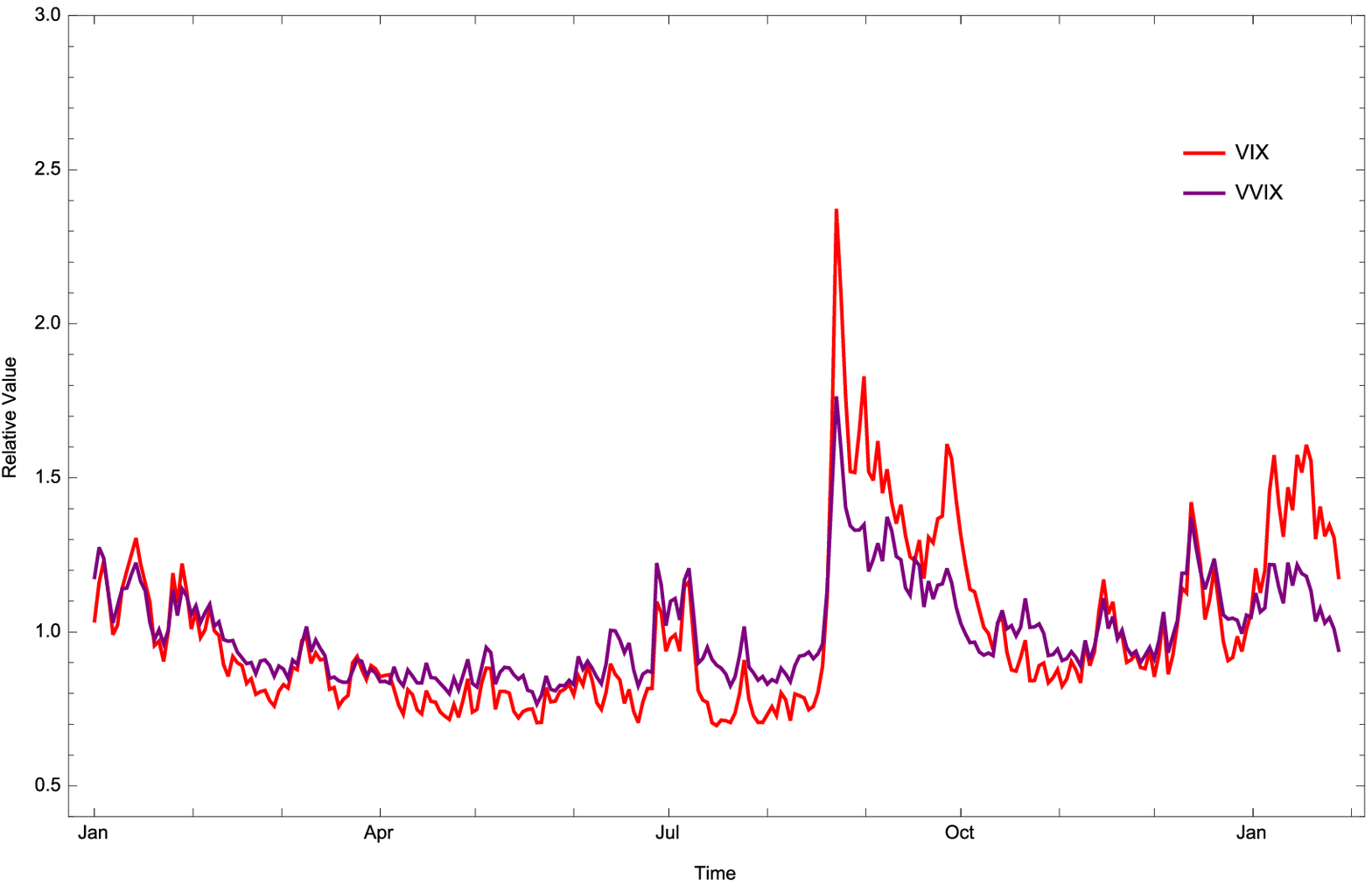}}
\subfigure[S\&P500 v.s. VIX]{\label{SPX v.s. VIX}\includegraphics[width=0.31\textwidth]{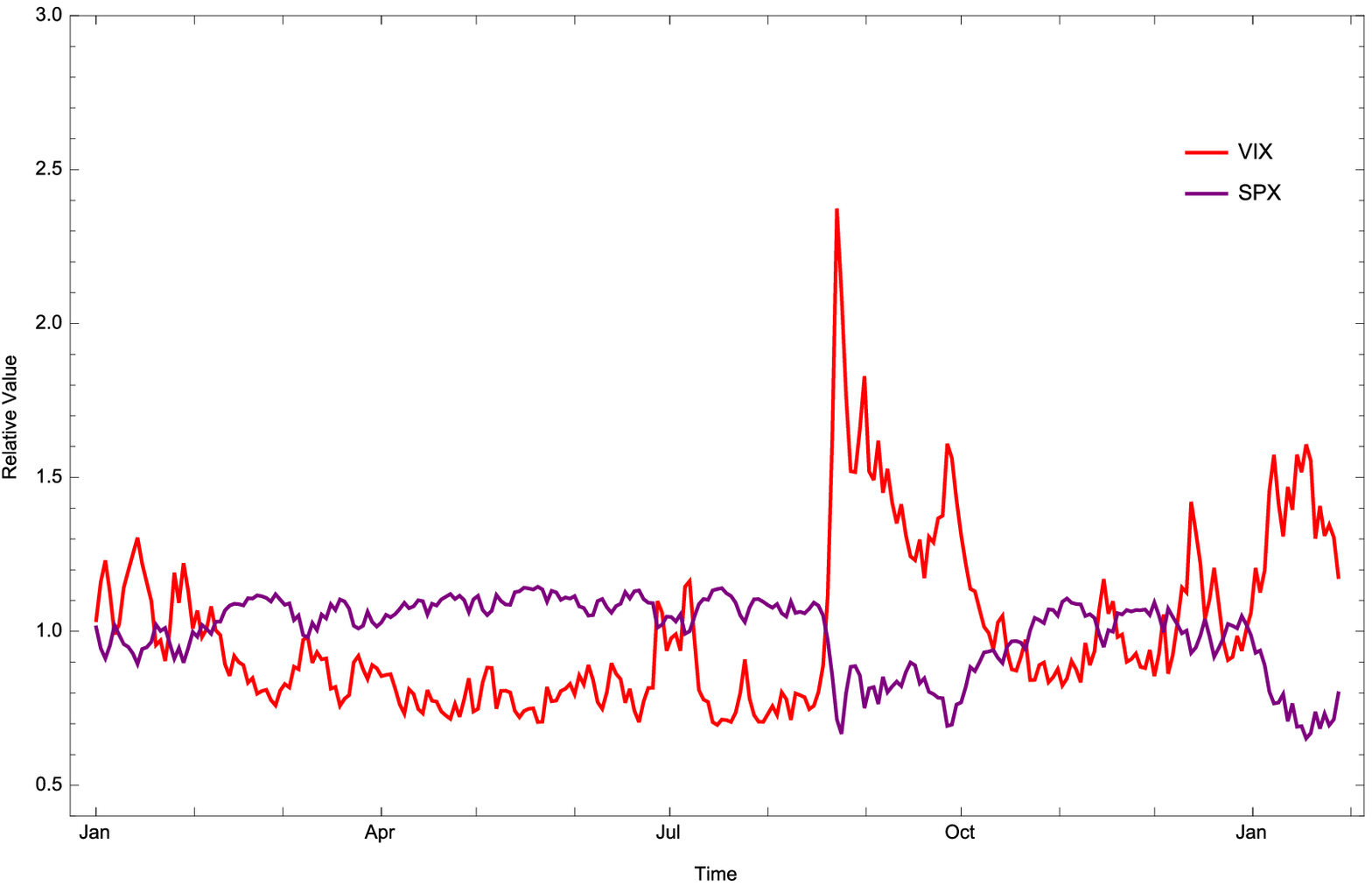}}
\subfigure[Scatter plot of VIX returns v.s. S\&P500 returns]{\label{vixspxreturn}\includegraphics[width=0.31\textwidth]{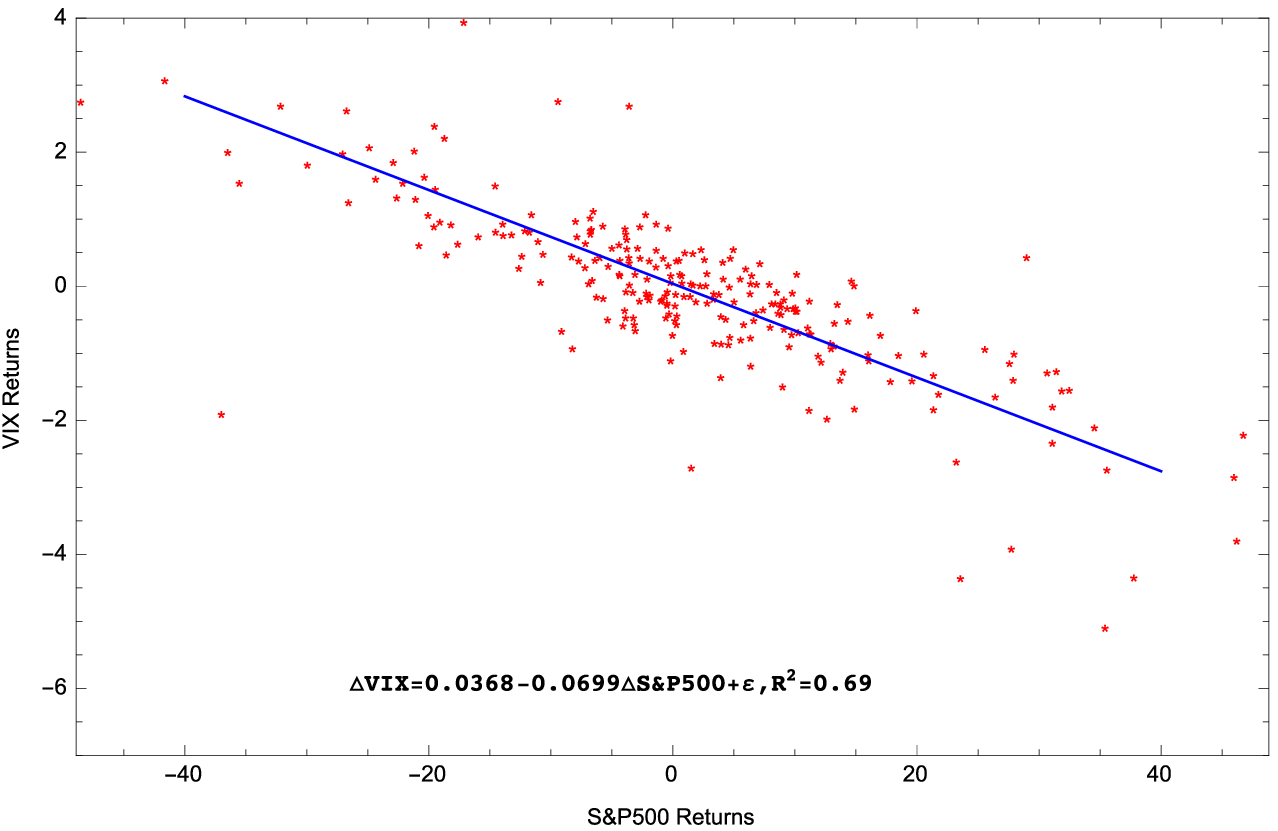}}
\end{figure}

We use the term ``\textit{free stochastic volatility}" to refer to the fact that the volatility of the VIX is not merely stochastic but also varies asymmetrically in response to changes in the S\&P500. Such asymmetry can be seen, for example, by looking at the dynamics of the VIX and the VVIX of the CBOE. Both of them not only fluctuate randomly and frequently but also change upward and downward asymmetrically. On the other hand, an interesting point of empirical regularity is that the stochastic volatility factor implicit in the VIX mainly derives from two components--one that can be spanned by S\&P500 and another that cannot. In order to gauge the extent to which VIX can be influenced by S\&P500, we run a regression of S\&P500 returns onto VIX returns, and find that the S\&P500 changes can explain 69\% of the variation in the vix based on the adjusted $\mathrm{R}^2$ of the regression. Thus, a consideration of free stochastic volatility is expected to better capture the dynamics of S\&P500 returns, which in turn better reconciles the theoretical model with VIX, and as a result, improves option valuation. Fig.\ref{Figure 1} depicts the historical dynamics of the relative VIX index along with its volatility VVIX index, S\&P500 index and scatter plot of VIX returns against S\&P500 returns respectively. Inspection of Fig.\ref{Figure 1}, a number of important features that need to be addressed if one wishes to model the S\&P500 and VIX dynamic jointly. First, there is a positive relation between changes in the VVIX and in the VIX. Second, VIX fluctuating over time shows periods of high and low volatility and VIX fluctuations tend to cluster, so that it has fueled the demand for free volatility models that are capable of flexibly reproducing the observed volatility. The third feature is the leverage effect, i.e. the S\&P500 and VIX indexes are negatively correlated. fourth, there are sign of simultaneous and oppositely directed jumps in S\&P500 and the VIX indexes.

Contrast to the empirical fact just discussed, the existing VIX option papers generally assume that the volatility of S\&P500 follows either square root or 3/2 diffusion. Based on Heston model \cite{H1993}, Grünbichler and Longstaff \cite{GL1996} priced options on instantaneous volatility, letting the volatility process follow a mean-reverting squared Bessel process (usually called CIR or square root process since it displayed a power 1/2 in the diffusion term). Later, Zhang and Zhu \cite{ZZ2006} proposed VIX futures pricing. Lian and Zhu \cite{LZ2013} extended this to the case of VIX options when the underlying index and its volatility are allowed to jump. On the other hand, the inverse of CIR process is still mean reverting and its diffusion term contains the process to the power 3/2. The related paper by Drimus \cite{D2012} studied the pricing and hedging of options on realized variance in the 3/2 non-affine model. Later, Baldeaux and Badran \cite{BB2014} derived a semi-closed formula for VIX options in a 3/2 stochastic volatility model with jumps in the underlying index. 
An interesting point to note is that Duan and Yeh \cite{DY2010} introduced a joint price-volatility model which is more general than that of Bate \cite{B2000} and Pan \cite{P2002} due to its free-root volatility process. It turns out that all of those models assume that the volatility diffusion term in equity is either square root or 3/2, i.e. a fixed power of CIR process, which make them not good enough to accurately accommodate the important feature of the data and volatility skew. Hence, we set free stochastic volatility parameter $\alpha$ to be the power of instantaneous variance instead of restricting it, letting the data speak as to its direction. We name it free stochastic volatility model (FSV). Finally, this paper aims to fill this vacuum to show the capacity of pricing VIX derivatives with FSV model.



The purpose of this paper is threefold. Firstly, we set up FSV model without jump and address a couple of technical conditions needed to avoid a well-defined FSV to explode. In this framework, the Heston and 3/2 models are only parametric example, hence one has a greater modeling freedom to control the distribution shape of asset returns time series. Since the four-parameter stochastic process, the Heston and 3/2, have empirically proven themselves quite good to provide a good control on the equity return, the additional fifth parameter $\alpha$ of the new process is expected to identify and categorize the equity distributions according to the economic behavior implicit in the prices. From this point of view, the estimation technique, Generalized Methods of Moments (GMM) of Hansen \cite{H1982}, used to empirically estimate and compare those models is outlined. This method is first established by Chan et al \cite{C1992} for comparing the short-term interest rate models. Later, Goard and Mazur \cite{GM2013} had used this method for several different models which are modelling VIX directly, without providing a connection to the underlying index. We emphasize that our estimation are conceptually different to that of them. One different part is data. Our model provides a connection to the dynamics of the underlying index, which are data for GMM estimation. The second different part is that we have done the estimation exercises for four different periods as illustrated to test whether FSV parameter $\alpha$ changes significantly with time. Finally, the conclusion that $\alpha$ of the indices implies different volatility fluctuation in different periods, and should be considered in our model.

Secondly, we establish FSV with setting upward and downward respectively. Both of them are assumed to follow independent compound Poisson processes with each having its own jump intensity and exponential jump-size distribution. Park \cite{P2016} proposed this kind of asymmetric jumps in his model. However, instead of modelling the VIX directly and adding jump in the VIX without providing underlying index, our model establish upward and downward in underlying index. Furthermore, in economically reasonable stochastic volatility models, the actual volatility should be a recurrent process without explosion. An explosion in the auxiliary volatility process causes some discounted financial claim prices to fail to be martingale; instead, they are only local martingales. For those claims, we propose and prove that the discounted stock price is a true martingale, and not just a local martingale under our FSV model.

Thirdly, we introduce four models of S\&P500 price dynamics with and without the features of free volatility and jumps. Each model can derive its quasi-analytic solutions to the prices of futures and options. The first one is the stochastic volatility (HSV) model based on Heston's model with volatility parameterized as in Cox et al. \cite{CIR1985}. The second one is the 3/2-stochastic volatility plus jumps (3/2-SVJ) model due to Baldeaux and Badran with asset returns containing price jumps. The following two models (FSV-type) relax the power parameter of instantaneous variance. The third model is the FSV model with asymmetric jumps (FSV-AJ), which means that upward and downward jumps are both assumed to follow independent compound Poisson processes with each having its own jump intensity and exponential jump-size distribution respectively. To compare FSV-AJ and investigate whether upward jumps can have an incremental effect on the pricing of VIX derivatives, the fourth model is a FSV with downward jumps only model (FSV-DJ). These consistent models contain information about VIX indexes, VIX futures and VIX options. Then, models are tested on the foregoing data covering span between March 1, 2016 and March 31, 2016.

Our paper investigate the effects of FSV parameter $\alpha$ on the pricing of VIX derivatives. Comparing the FSV-type with 3/2-SVJ and HSV, we find that former is strongly and obviously preferred to the latter in both in-sample and out-of-sample tests. By comparing FSV-DJ and FSV-AJ model, the in-sample and out-of-sample performance of FSV-AJ show lightly better than FSV-DJ, especially in OTM options. We plot the VIX future values as a function of time to maturities under four models, and presented in Fig.~\ref{Figure 3}. In addition, we also plot those of the VIX option values as a function of strikes and presented in Fig.~\ref{Figure 10}. That is, allowing for FSV parameter $\alpha$ in model make large improvements in fitting the prices of VIX futures and options. We next look at the effects of setting upward jumps and downward jumps respectively. Good and bad surprises may arrive with different rates and sizes and investors may react differently to them. Hence, our models assume that upward and downward jumps occur independently with different frequencies and magnitudes. Results show that including upward jump contribute to pricing OTM options. Finally, we find that a joint consideration of the FSV parameter $\alpha$ and asymmetric jumps is expected to better capture the dynamics of equity returns, which in turn better reconciles the theoretical model with the observed volatility smile/smirk, and as a result, improves valuation of VIX futures and options.

The balance of this paper is organized as follows. In Section 2, we propose FSV model under risk-neutral measure and its non-explosion condition. The estimation technique, GMM, used to empirical test is outlined. Section 3, we present the general FSV-AJ model setup and prove the discounted stock price is a martingale. Section 4, pricing formulae of VIX future and option are provided. The VIX options and futures data are described in Section 5. Section 6 shows the parameters estimates and some preliminary analysis, while Section 7 provides the main empirical result on pricing performance across the different model specification. Concluding remarks are offered in Section 8.

\section{Free stochastic volatility model Applied to the S\&P500}
In this Section, we introduce free stochastic volatility model and provide numerical result to illustrate that this kind of model is capable of producing implied volatility skews in VIX options. Consider a risk-neutral probability space ($\varOmega$, $\mathscr{F}$, $\mathbb{Q}$) and information filration $\{\mathscr{F}_t\}$, where the price process $S_t$ is adapted to the filtration $\{\mathscr{F}_t\}_{t\ge0}$. The general dynamics for stock price and variance process include free stochastic volatility parameter $\alpha$ and Cox-Ingersoll-Ross (CIR) process:

\begin{equation}\label{eq0.1}
dS_t=S_t\left(rdt+\gamma V_t^\alpha dW_t^{\mathbb{Q}}\right)
\end{equation}
where the stochastic factor $V$ evolves as
\begin{equation}\label{eq0.2}
dV_t=\kappa \left(\theta-V_t\right) dt+\sigma \sqrt{V_t}dZ_t^{\mathbb{Q}},
\end{equation}
starting at $S_0>0$ and $V_0>0$, respectively. As per usual, $r,\kappa, \theta$ are assumed to be strictly positive. $\theta$ controls for the long-term mean of $V_t$ and $\kappa$ is the mean-reversion speed of $V_t$. Furthermore, $\sigma\in\mathbb{R}$ captures the volatility of $V_t$. $W_t^{\mathbb{Q}}$ and $Z_t^{\mathbb{Q}}$ are standard Brownian motions under risk-neutral $\mathbb{Q}$ measure. $\alpha\in\lbrack-\frac{1}{2},\frac{3}{2}\rbrack$ stands for free stochastic volatility part so that we wish it can capture more complicated fluctuation volatility implied in S\&P500. To capture the market behavior of S\&P500 and VIX index, we hope for a negative correlation, denoted by $-1\le\rho<0$, between $dZ_t^{\mathbb{Q}}$ and $dW_t^{\mathbb{Q}}$. Our joint price-volatility model is more general than that of Heston model and 3/2 stochastic models because their specifications correspond to the special case of $\alpha$ to be fixed.

We now address a couple of technical condition needed to have a well-define free stochastic volatility model. In order to avoid $V_t^{\alpha}$ blowing up, some technical conditions with parameters should be satisfied. First, apply the It$\hat{\text{o}}$ formula to $D_t=V_t^{\alpha}$, we get
$$dD_t=\left(\alpha D_t^{1-\frac{1}{\alpha}}\left(\kappa\theta+(\alpha-1)\frac{\sigma^2}{2}\right)-\alpha\kappa D_t\right)dt+\alpha\sigma D_t^{1-\frac{1}{2\alpha}}dZ_t.$$
Its scale density is
$$s(D)=D^{-\frac{2\tilde{\theta}}{\alpha\sigma^2}}e^{\frac{2\kappa}{\sigma^2}D^{\frac{1}{\alpha}}}$$, using $\tilde{\theta}=\kappa\theta+(\alpha-1)\frac{\sigma^2}{2}$. Hence, the scale measure $S(c,d)$ becomes
$$S(c,d)=\int_c^{d}s(x)dx=\int_c^{d}x^{-\frac{2\tilde{\theta}}{\alpha\sigma^2}}e^{\frac{2\kappa}{\sigma^2}x^{\frac{1}{\alpha}}}dx.$$
In particular, we see that
\begin{equation}
S(c,+\infty)=\left\{
\begin{array}{l}
\infty\quad\quad\text{for}\hspace{0.1cm}(\text{i})\hspace{0.1cm}\alpha>0\hspace{0.1cm}\text{or}\hspace{0.1cm}(\text{ii})\hspace{0.1cm}\alpha<0\hspace{0.1cm}\text{and}\hspace{0.1cm}\tilde{\theta}>0\\
<\infty\quad\text{for}\hspace{0.1cm}\alpha<0\hspace{0.1cm}\text{and}\hspace{0.1cm}\frac{2\tilde{\theta}}{\alpha\sigma^2}>1\\
\end{array}
\right\}.
\end{equation}
Hence, if $\tilde{\theta}>0$, $S(c,+\infty)$ can totally be divergent no matter what $\alpha$ it is. $D=\infty$ is classified as an natural or entrance boundary under boundary classification criteria. Fortunately, both kinds of boundaries make $D=\infty$ unreachable from the interior in finite time, which means $D=V_t^{\alpha}$ never explode in finite time.

On the other hand, CIR process is a space-time change BESQ process. Using Feller's boundary condition, $V_t$ process starting from a positive initial point stays strictly positive and never explodes in finite interval, if and only if $2\kappa\theta\ge\sigma^2.$ Hence, it will also reject $S(c,+\infty)<\infty$ due to the equivalence of $2\kappa\theta\ge\sigma^2$ and $\frac{2\tilde{\theta}}{\alpha\sigma^2}\le1$. Collecting our result, we have establish $\tilde{\theta}>0$, i.e.
$$\frac{2\kappa\theta}{\sigma^2}>1-\alpha$$
to be technical condition of parameters.


In order to provide the readers with parameters that can be testified. Following Goard and Mazur \cite{GM2013}, among others, we use the same estimation technique to estimate the parameters in the continuous-time model and create hypothesis testing to see whether these parameters impose unreasonable overidentifying restrictions. We emphasize that our estimation are conceptually different to that of Goard and Mazur \cite{GM2013}. One different part is data. Instead of modelling the VIX directly, without providing a connection to the underlying index, our model for VIX in \ref{eq0.1} and \ref{eq0.2} is derived directly from the dynamics of the underlying index. As the consequence, we use S\&P500 data to estimate, not VIX data. On the other hand, we use the most recent one-year period S\&P500 index data: January 2, 2016--December 2, 2016. We have done the estimation exercises for four different periods: the whole period from January 2, 2016--December 2, 2016 as testified whether these parameters impose unreasonable overidentifying restrictions on each model, January 2, 2016--April 2, 2016, May 2, 2016--August 2, 2016 and September 2, 2016--December 2, 2016 as illustrated to test whether our free stochastic volatility parameter $\alpha$ changes significantly with different periods in 2016. Referred to \autoref{app}  for detailed computation process, we rewrite the corresponding discrete-time econometric specification:

\begin{align}\label{eq2.41}
\frac{S_{t+1}-S_{t}}{S_{t}}&=r\Delta t+\gamma\varepsilon_{t+1}\\
\mathbb{E}^{\mathbb{Q}}\left[\varepsilon_{t+1}\right]&=0
\end{align}
\begin{align}\label{eq2.51}
\mathbb{E}^{\mathbb{Q}}\left[\varepsilon_{t+1}^{2}\right]&=\frac{\Gamma\left[2\left(\alpha+\frac{\kappa\theta}{\sigma^2}\right)\right]}{\Gamma{\left[\frac{2\kappa\theta}{\sigma^2}\right]}}\left(\frac{\sigma^4}{4}\right)^{\alpha}e^{-2\kappa\left(t\left(\alpha+\frac{\kappa\theta}{\sigma^2}\right)+\frac{V_0}{\left(e^{t\kappa}-1\right)\sigma^2}\right)}\notag\\
&\quad\times\left(\frac{\kappa}{1-e^{-t\kappa}}\right)^{\frac{2\kappa\theta}{\sigma^2}}\left(\frac{\kappa}{-1+e^{t\kappa}}\right)^{-2\alpha-\frac{2\kappa\theta}{\sigma^2}} \ _{1}F_1\left[2\left(\alpha+\frac{\theta\kappa}{\sigma^2}\right),\frac{2\kappa\theta}{\sigma^2},\frac{2V_{0}\kappa}{\left(-1+e^{t\kappa}\right)\sigma^2}\right]\Delta t \\
&\triangleq \phi(\kappa, \theta, \sigma, \alpha)\Delta t \notag
\end{align}
where $\ _{1}F_1$ is the Kummer confluent hypergeometric function.

Our econometric approach is to estimate parameters of this process and test \eqref{eq2.41}-\eqref{eq2.51} as a set of overidentifying restrictions on a system of moment equations using the GMM of Hansen \cite{H1982}. Advantages of the method, as state by Chan et al \cite{C1992} make it an intuitive and logical choice for the estimation of the continuous-time volatility processes. First, it makes no assumptions about the distributional nature of the changes in S\&P500. And the asymptotic justification for the GMM procedure requires only that the distribution changes be stationary and ergodic and that the relevant expectations exist. As well GMM estimators and their standard errors are consistent even if the disturbances are conditionally heteroskedastic.

Define $m_{t}(\omega)\phi$ be parameter vector with elements $\kappa, \theta, \sigma, \gamma$ and $\alpha$ and let the vector $m_{t}(\omega)$ be
\begin{equation}
m_{t}(\omega)=\left[
\begin{array}{c}
\varepsilon_{t+1}\\
\varepsilon_{t+1}S_{t}\\
\varepsilon_{t+1}S_{t}^2\\
\varepsilon^2_{t+1}-\phi(\kappa, \theta, \sigma, \alpha)\Delta t\\
\left(\varepsilon^2_{t+1}-\phi(\kappa, \theta, \sigma, \alpha)\Delta t\right)S_{t}\\
\end{array}
\right]
\end{equation}

Under the null hypothesis that the restrictions implied in equations \eqref{eq2.41}-\eqref{eq2.51} are true, the orthogonality conditions, $\mathbb{E}^{\mathbb{Q}}[m_{t}(\omega)]=0$. The GMM technique consists of replacing $\mathbb{E}^{\mathbb{Q}}[m_{t}(\omega)]$ with its sample counterpart, $M_{T}(\omega)$, using the T observations where $M_{T}(\omega)=\frac{1}{T}\sum^{T}_{t=1}m_t(\omega)$, and then choosing parameters which minimize the quadratic form 
\begin{equation}\label{eq2.81}
J_{T}(\omega)=M_{T}'(\omega)W_{T}(\omega)M_{T}(\omega),
\end{equation}
where $W_{T}(\omega)$ is a positive definite symmetric weighting matrix. The GMM estimates of the overidentified parameter subvector of $\omega$ do depend on the choice of $W_{T}(\omega)$. Hansen \cite{H1982} provided that setting $W_{T}(\omega)=S^{-1}(\omega)$, where $S(\omega)=\mathbb{E}^{\mathbb{Q}}[m_{t}(\omega)m'_{t}(\omega)]$, delivers the GMM estimator of $\omega$ with the smallest asymptotic covariance matrix. The minimized value of the quadratic form in \eqref{eq2.81} is distributed $\chi^{2}$ under the null hypothesis that the model is true with degrees of freedom equal to the number of orthogonality conditions net of the number of parameters to be estimated. This $\chi^{2}$ measure provides a goodness-of-fit test for the model. A hypothesis test is then used to test whether the models impose unreasonable overidentifying restrictions upon the unrestricted model, i.e., for each nested model, we create the hypothesis test of $a_{0}$ versus $a_{1}$ where

\quad\quad $a_{0}$: The model does not impose overidentifying restrictions and is hence not misspecified

\quad\quad $a_{1}$: The model does impose overidentifying restrictions and is hence misspecified

The test statistic, $R=T\left[J_{T}(\tilde{\omega})-J_{T}(\hat{\omega})\right]$, is asymptotically distributed $\chi^2$ with degrees of freedom equal to the number of restrictions on the general model to obtain the nested model. By using the same weighting matrix from the unrestricted model, this test statistic is the normalized difference of the restricted $J_{T}(\tilde{\omega})$ and unrestricted $J_{T}(\hat{\omega})$ objective functions for the efficient GMM estimator. A high value of this statistic means that the model is misspecified. If the p--value is less than the required level of significance then we can draw a conclusion that this model is misspecified.

With a view to a fully and complete description, the GMM results are presented in Table \ref{Table -1}. Parameters of all estimations have small standard deviations and hence are stable. First, regarding the reported $\chi^2$ values estimated in the whole period, Heston and 3/2 models are rejected at the 1\% level of significance, be quite informative to explain the internal working of each model to fit S\&P500 data. Hence, there models are misspecified and place irrational restrictions on the unrestricted model. On the other hand, however, free stochastic model results in acceptance at the 1\% level significance with p-value of 0.767, due to the reason that it contains a new parameter $\alpha$ contributing to fit S\&P500. Thus, the model is not misspecified. Third, we also estimate the parameters for the free stochastic model with different periods in 2016. Given that free stochastic model has $\alpha=0.864$ in period A, $\alpha=0.901$ in period B, and $\alpha=0.943$ in period C, we'd like to draw the conclusion that $\alpha$ of the indices implies different volatility fluctuation in different periods, and should be considered in our model. Heuristically, the free stochastic model with free volatility parameter $\alpha$ fit S\&P500 that is widely regarded as frequent fluctuation index quite well while both Heston and 3/2 model are rejected.

\begin{table}
\renewcommand\arraystretch{1.2}
\caption{\\We estimate the parameters for different models processes nested within equation \eqref{eq0.1} with its standard error in parentheses to test the significance of the individual parameters. The $\chi^2$ test statistics are computed following the method outlined in Newey and West \cite{W1985} with p-value in parentheses and associated degrees of freedom (DF). GMM method estimates the values of parameters from the historical data over four different periods: The whole period, period A, period B, period C. Period A--C are only used in free stochastic model.}
\label{Table -1}
\centering
\begin{tabular*}{\textwidth}{@{\extracolsep{\fill}}llllllllll}
\hline
\multicolumn{3}{l}{Model}  & $\kappa$ & $\theta$ & $\sigma$ & $\gamma$  & $\alpha$ & $\chi^2$ & DF \\ 
\hline
\leftline{The whole period: January 2, 2016--December 2, 2016}\\
\multicolumn{3}{l}{Unrestricted} & 1.513 & 0.475 & 0.509 & 1.011 & 1.274 &  N/A  & N/A \\
\multicolumn{3}{l}{}            &   $<[0.001]$ & $<[0.001]$ &$<[0.001]$ & $<[0.012]$   &       $<[0.08]$   & & \\
\multicolumn{3}{l}{Heston} & 2.829 & 0.020 & 0.831 & 1 & 0.5 &16.583 & 2\\
\multicolumn{3}{l}{}        & $<[0.001]$    & $<[0.001]$ &$<[0.001]$   &    &                      &$<[0.001]$& \\
\multicolumn{3}{l}{3/2 model}& 10.29 & 56.49 & -2.598 & 1 & -0.5 &15.457 & 2\\
\multicolumn{3}{l}{}            & $<[0.001]$ &  $<[0.06]$ &  $<[0.03]$       &    &                      &$<[0.001]$& \\
\multicolumn{3}{l}{Free stochastic}& 2.036 & 0.502 & 0.650 & 1 & 1.288  &0.087 & 1\\
\multicolumn{3}{l}{}     &    $<[0.001]$     &    $<[0.001]$       &    $<[0.001]$    &    &      $<[0.01]$     & 0.767& \\
\hline
\leftline{Period A: January 2, 2016--April 2, 2016}\\
\multicolumn{3}{l}{Free stochastic}& 1.699 & 0.617 & 0.484 & 1 & 0.864  &3.866 & 1\\
\multicolumn{3}{l}{} &   $<[0.001]$  & $<[0.001]$     &    $<[0.001]$    &  &    $<[0.02]$         & 0.0492& \\
\hline
\leftline{Period B: May 2, 2016--August 2, 2016}\\
\multicolumn{3}{l}{Free stochastic}& 2.326 & 0.695 & 0.687 & 1 & 0.901  &1.654 & 1\\
\multicolumn{3}{l}{}   &   $<[0.001]$      &     $<[0.001]$      &    $<[0.001]$    &    & $<[0.07]$            & 0.1983& \\
\hline
\leftline{Period C: September 2, 2016--December 2, 2016}\\
\multicolumn{3}{l}{Free stochastic}& 1.911 & 0.387 & 0.469 & 1 & 0.943  &1.119 & 1\\
\multicolumn{3}{l}{}  & $<[0.001]$        &     $<[0.001]$      &$<[0.001]$        &    &           $<[0.05]$& 0.289& \\
\hline
\end{tabular*}
\end{table}

\section{Free stochastic volatility plus asymmetric jumps model}\label{sec2}
As stated in Sepp \cite{S2008}. The author asserts that `stochastic volatility models without jumps are not consistent with the implied volatility skew observed in options on the VIX...' and that `...only the stochastic volatility with appropriately chosen jumps can fit the implied VIX skew'. However, the jump sizes modeled in Heston \cite{H1993}, Baldeaux and Badran\cite{BB2014}, even Lian and Zhu \cite{LZ2013} are distributed to be normal in underlying dynamics. Normal jump makes upward and downward jumps symmetric due to its central mean and variance. Furthermore, a closely related notion is the observation that the equity market is often more volatile on the downside than the upside. Because of that, in this section, upward and downward jumps in underlying dynamics are assumed respectively. Both are driven by independent compound Possion processes. We first extends free stochastic volatility to the asset-price process with asymmetric jumps. That's a new family of free volatility and jump-diffusion models for the dynamics of VIX. We derive semi-closed-form solution to the prices of futures and options. In addition, we provide a description of our estimation method and introduce a competing model from 3/2 model plus jumps and Heston model. The Heston model is a limiting case of every other model and viewed as the benchmark in our empirical study. We then demonstrate that the VIX formula is still valid when the underlying asset price process includes FSV parameter $\alpha$ and asymmetric jumps.

Consider the dynamics for the underlying index given by:
\begin{equation}\label{eq2.1}
\frac{dS_t}{S_t}=(r-\lambda_1\tilde{\mu}_1-\lambda_2\tilde{\mu}_2)dt+V_t^\alpha dW_t^{\mathbb{Q}}+(e^{J_1^{\mathbb{Q}}}-1)dN_{1t}^{\mathbb{Q}}+(e^{J_2^{\mathbb{Q}}}-1)dN_{2t}^{\mathbb{Q}}
\end{equation}
where the stochastic factor $V$ evolves as
\begin{equation}\label{eq2.2}
dV_t=\kappa \left(\theta-V_t\right) dt+\sigma \sqrt{V_t}dZ_t^{\mathbb{Q}}.
\end{equation}
As for asymmetric jumps,  we again emphasize that this jump is conceptually different to that of Park \cite{P2016}. Instead of modelling the VIX directly and adding jump in the VIX without providing underlying index, we establish upward and downward in underlying index. Specifically, we assume that they are driven by independent compound Possion processes, with each having its own jump intensity and jump-size distribution. $N_{1t}^{\mathbb{Q}}$ and $N_{2t}^{\mathbb{Q}}$ denote risk-neutral Possion processes driving upward and downward jumps with jump intensities  $\lambda_1$ and $\lambda_2$, respectively. Upward jump magnitudes, $J_1^{\mathbb{Q}}$, are assumed to follow an independent exponential distribution with a positive mean, $\mu_1>0$, with the probability density function taking $\frac{1}{\mu_1}\exp\{-x/\mu_1\}$ if $x>0$ and 0 otherwise where the paramters $\mu_1, \tilde{\mu}_1$ satisfy the following relationship: $$\tilde{\mu}_1=\frac{1}{1-\mu_1}-1$$. With that assumption, $e^{J_1^{\mathbb{Q}}}-1>0$ satisfies upward jump condition. The term $\lambda_1\tilde{\mu}_1dt$ is used to center the Possion innovation so that $(e^{J_1^{\mathbb{Q}}}-1)dN_{1t}^{\mathbb{Q}}-\lambda_1\tilde{\mu}_1dt$ has its mean equal to 0. Similarly, downward jump magnitudes,  $J_2^{\mathbb{Q}}$, are assumed to follow an independent exponential distribution with a negative mean, $\mu_2<0$, with the probability density function taking $\frac{1}{|\mu_2|}\exp\{-x/\mu_2\}$ if $x<0$ and 0 otherwise where the paramters $\mu_2, \tilde{\mu}_2$ satisfy the following relationship: $$\tilde{\mu}_2=\frac{1}{1-\mu_2}-1$$. With that assumption, $-1<e^{J_2^{\mathbb{Q}}}-1<0$ reaches downward jump condition. The term $\lambda_2\tilde{\mu}_2dt$ is used to center the Possion innovation so that $(e^{J_2^{\mathbb{Q}}}-1)dN_{2t}^{\mathbb{Q}}-\lambda_2\tilde{\mu}_2dt$ has its mean equal to 0.

Intergrating \eqref{eq2.1} yields
\begin{equation}\label{eq2.3}
S_t=\tilde{S}_t\prod_{s=1}^{N_{1t}}e^{J_{1s}^\mathbb{Q}}\prod_{s=1}^{N_{2t}}e^{J_{2s}^{\mathbb{Q}}}
\end{equation}
where
\begin{equation*}
\tilde{S}_t=S_0\exp\left\{\int_0^t\left(r-\lambda_1\tilde{\mu}_1-\lambda_2\tilde{\mu}_2-\frac{1}{2}V_s^{2\alpha}\right)ds+\int_0^tV_s^{\alpha}dW_s^{\mathbb{Q}}\right\}
\end{equation*}
and $J_{1s}^\mathbb{Q}$ and $J_{2s}^\mathbb{Q}$ denote the logarithm of the relative positive jump and negative jump size of $s$th jump, respectively. Since the model in Equation \ref{eq2.1} and \ref{eq2.2} is not affine, Equation \ref{eq2.3} gives us an important starting point for our analysis. In particular, one can now determine that the discounted stock price is a true martingale, and not just a local martingale under our assumed model.

\begin{prop}
Let S and V be given by Equations \ref{eq2.1} and \ref{eq2.2} respectively. Then the discounted stock price $\bar{S}_t=\frac{S_t}{e^{rt}}$ is a true martingale, and not just a strict local martingale under $\mathbb{Q}$, if and only if:
\begin{align*}
\frac{2\kappa\theta}{\sigma^2}\ge1
\end{align*}
\begin{proof}
We compute
\begin{align}\label{eq2.5}
\mathbb{E}^{\mathbb{Q}}\left[\bar{S}_T|\mathscr{F}_t\right]&=\mathbb{E}_{t}^{\mathbb{Q}}\left[\frac{\tilde{S}_T\prod_{s=1}^{N_{1T}}e^{J_{1s}^\mathbb{Q}}\prod_{s=1}^{N_{2T}}e^{J_{2s}^{\mathbb{Q}}}}{e^{rT}}\right]\notag\\
&=\mathbb{E}_{t}^{\mathbb{Q}}\left[\frac{\tilde{S}_t\prod_{s=1}^{N_{1T}}e^{J_{1s}^\mathbb{Q}}\prod_{s=1}^{N_{2T}}e^{J_{2s}^{\mathbb{Q}}}}{e^{rt}}\cdot\frac{\exp\left\{\int_t^{T}\left(r-\lambda_1\tilde{\mu}_1-\lambda_2\tilde{\mu}_2-\frac{1}{2}V_s^{2\alpha}\right)ds+\int_t^TV_s^{\alpha}dW_s^{\mathbb{Q}}\right\}}{e^{r(T-t)}}\right]\notag\\
&=\bar{S}_t\cdot\mathbb{E}_{t}^{\mathbb{Q}}\left[\exp\left\{\int_t^{T}\left(-\lambda_1\tilde{\mu}_1-\lambda_2\tilde{\mu}_2-\frac{1}{2}V_s^{2\alpha}\right)ds+\int_t^TV_s^{\alpha}dW_s^{\mathbb{Q}}\right\}\prod_{s=N_t+1}^{N_{1T}}e^{J_{1s}^\mathbb{Q}}\prod_{s=N_t+1}^{N_{2T}}e^{J_{2s}^{\mathbb{Q}}}\right]\notag\\
&=\bar{S}_t\cdot\mathbb{E}_{t}^{\mathbb{Q}}\left[\exp\left\{\int_t^{T}\left(-\lambda_1\tilde{\mu}_1-\lambda_2\tilde{\mu}_2-\frac{1}{2}V_s^{2\alpha}\right)ds+\int_t^TV_s^{\alpha}dW_s^{\mathbb{Q}}\right\}\right]e^{\lambda_1(T-t)\tilde{\mu}_1+\lambda_2(T-t)\tilde{\mu}_2}\notag\\
&=\bar{S}_t\cdot\mathbb{E}_{t}^{\mathbb{Q}}\left[-\frac{1}{2}\int_t^{T}V_s^{2\alpha}ds+\int_t^TV_s^{\alpha}dW_s^{\mathbb{Q}}\right]\notag\\
&=\bar{S}_t\cdot\mathbb{E}_{t}^{\mathbb{Q}}\left[-\frac{\rho^2}{2}\int_t^{T}V_s^{2\alpha}ds+\rho\int_t^TV_s^{\alpha}dZ_s^{\mathbb{Q}}\right]\notag\\
&=\bar{S}_t\cdot\mathbb{E}_{t}^{\mathbb{Q}}\left[\xi_{t,T}\right]
\end{align}
where we define the process $\xi=\{\xi_t, t\ge0\}$ via $$\xi_t:=\exp\left\{-\frac{\rho^2}{2}\int_0^{t}V_s^{2\alpha}ds+\rho\int_0^tV_s^{\alpha}dZ_s^{\mathbb{Q}}\right\}.$$ Obviously, it is exponential local martingale.
In order to see whether the process $\bar{S}_t=\frac{S_t}{e^{rt}}$ is a martingale, the Feller non-explosion test for $V_t$ must be satisfied under both historical and risk neutral probability measures. First, under the risk neutral probability measure $\mathbb{Q}$ the process $V_t$ cannot explode to $\infty$ and does not reach 0 if Feller condition is satisfied, i.e.
$$2\kappa\theta\ge\sigma^2$$
On the other hand, following Lewis \cite{L2000}, it involves change of Brownian motion for the volatility process: 
$$d\hat{W}_t=dW_t-\rho V_t^{\alpha}dt,$$
where $d\hat{W}_t$ is a Wiener process under the historical measure. Under this measure the auxiliary volatility process $V_t$ solves
$$d\hat{V}_t=\left(\kappa\theta-\kappa V_t+\sigma\rho V_t^{\alpha+\frac{1}{2}}\right)dt+\sigma \sqrt{V_t}d\hat{W}_t.$$
We apply the Feller explosion test which was explained in Sec. 3 of Lewis \cite{L2000}. The scale density 
$$s(V)=V^{-\frac{2\kappa\theta}{\sigma^2}}e^{\frac{2\kappa}{\sigma^2}V-\frac{2\rho}{\sigma}\frac{1}{\alpha+\frac{1}{2}}V^{\alpha+\frac{1}{2}}}$$ and so the scale measure is given by
$$S(c,d)=\int_c^d s(V)dV=\int_c^dV^{-\frac{2\kappa\theta}{\sigma^2}}e^{\frac{2\kappa}{\sigma^2}V-\frac{2\rho}{\sigma}\frac{1}{\alpha+\frac{1}{2}}V^{\alpha+\frac{1}{2}}}dV\quad\text{for} \quad0<c<d.$$
Assuming $\rho<0$, then $S(c,+\infty)=\infty$ if and only of $\alpha+\frac{1}{2}>0$. Since $\alpha\in[-\frac{1}{2},\frac{3}{2}]$ is asked in our model, the divergence result follows. For the speed density, we have
$$m(V)=\frac{1}{\sigma^2Vs(V)}=\frac{1}{\sigma^2}V^{\frac{2\kappa\theta}{\sigma^2}-1}e^{-\frac{2\kappa}{\sigma^2}V+\frac{2\rho}{\sigma}\frac{1}{\alpha+\frac{1}{2}}V^{\alpha+\frac{1}{2}}}.$$
Clearly, $N(\infty)=\lim\limits_{d\uparrow\infty}\int_c^dS(c,x)m(x)dx$ diverges as $d\to\infty$.  According to boundary classification criteria, it show that $S(c,+\infty)=\infty$ and $N(\infty)=\infty$ suffice to classify $V=\infty$ as an natural boundary under a negative correlation coefficient, which means there is no explosion in this case, since the boundary is unreachable in finite time. Collecting our result, the process $\bar{S}_t$ is a martingale. 
\end{proof}
\end{prop}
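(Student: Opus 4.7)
The plan is to exploit the explicit multiplicative decomposition in Equation \eqref{eq2.3} to strip away the jump factors, and then reduce the remaining diffusion part to a non-explosion question handled by Feller's boundary classification in Lewis's form.

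First, I would compute $\mathbb{E}^{\mathbb{Q}}[\bar{S}_T\mid \mathscr{F}_t]$ directly by plugging in $S_T = \tilde{S}_T \prod_{s=1}^{N_{1T}} e^{J_{1s}^{\mathbb{Q}}} \prod_{s=1}^{N_{2T}} e^{J_{2s}^{\mathbb{Q}}}$, splitting each jump product at time $t$ so as to pull out the $\mathscr{F}_t$-measurable factor $\bar{S}_t$, and using the mutual independence of the two Poisson streams, their jump sizes, the Brownian motions, and $V$. The key algebraic identity is $\mathbb{E}[e^{J_i^{\mathbb{Q}}}] = 1/(1-\mu_i) = 1+\tilde{\mu}_i$ from the exponential MGF, so for each compound Poisson factor $\mathbb{E}\bigl[\prod_{s=N_{it}+1}^{N_{iT}} e^{J_{is}^{\mathbb{Q}}}\bigr] = \exp\bigl(\lambda_i (T-t)\tilde{\mu}_i\bigr)$, which exactly cancels the drift compensator $\exp\bigl(-\lambda_i \tilde{\mu}_i (T-t)\bigr)$ sitting inside $\tilde{S}_T/\tilde{S}_t$.

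Second, once the jumps are dispatched, the martingale property is equivalent to $\mathbb{E}^{\mathbb{Q}}[\xi_{t,T}\mid\mathscr{F}_t]=1$, where
\[
\xi_{t,T} := \exp\!\left\{-\tfrac{1}{2}\int_t^T V_s^{2\alpha}\,ds + \int_t^T V_s^{\alpha}\,dW_s^{\mathbb{Q}}\right\}.
\]
Writing $dW_t^{\mathbb{Q}} = \rho\, dZ_t^{\mathbb{Q}} + \sqrt{1-\rho^2}\, d\widetilde{Z}_t^{\mathbb{Q}}$ with $\widetilde{Z}$ independent of $V$, the $\widetilde{Z}$-component of $\xi$ conditionally integrates to $1$, and the question collapses to whether $\eta_t := \exp\{-\tfrac{\rho^2}{2}\int_0^t V_s^{2\alpha}ds + \rho\int_0^t V_s^{\alpha}\,dZ_s^{\mathbb{Q}}\}$ is a true martingale rather than just a local martingale.

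Third, this is the main obstacle and is handled by treating $\eta$ as a candidate Radon–Nikodym density and invoking Lewis's criterion: by Girsanov, $d\widehat{W}_t = dZ_t^{\mathbb{Q}} - \rho V_t^{\alpha}dt$ is a Brownian motion under the candidate measure, and $V$ satisfies the auxiliary SDE $d\widehat{V}_t = (\kappa\theta - \kappa V_t + \sigma\rho V_t^{\alpha+1/2})\,dt + \sigma\sqrt{V_t}\,d\widehat{W}_t$. The process $\eta$ is a true martingale if and only if this auxiliary diffusion is non-explosive (neither $0$ nor $\infty$ is reached in finite time). Boundary $0$ is kept unreachable under the original $\mathbb{Q}$ precisely by Feller's condition $2\kappa\theta \ge \sigma^2$, which transfers to the auxiliary process since the extra drift $\sigma\rho V_t^{\alpha+1/2}$ is subdominant near the origin; boundary $\infty$ I would dispatch by computing the scale density $s(V) = V^{-2\kappa\theta/\sigma^2}\exp\bigl(\tfrac{2\kappa}{\sigma^2}V - \tfrac{2\rho}{\sigma(\alpha+1/2)}V^{\alpha+1/2}\bigr)$ and speed density $m(V) = 1/(\sigma^2 V s(V))$, and verifying via $\rho<0$ and $\alpha+\tfrac{1}{2}>0$ (which holds on $[-\tfrac12,\tfrac32]$) that both $S(c,\infty)=\infty$ and $N(\infty)=\infty$, classifying $\infty$ as a natural, hence unreachable, boundary. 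For the converse direction, if $2\kappa\theta<\sigma^2$, then $0$ is reachable in finite time for the auxiliary diffusion, so mass is lost from the candidate density, $\mathbb{E}^{\mathbb{Q}}[\eta_T]<1$, and $\bar{S}$ fails to be a true martingale.
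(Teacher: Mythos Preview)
Your proposal is correct and follows essentially the same route as the paper: strip off the jump factors via the compensator identity, reduce to the stochastic exponential in $Z$, then apply Lewis's non-explosion criterion to the Girsanov-shifted auxiliary diffusion via the scale and speed densities. If anything, your version is slightly more complete, since you explicitly handle the orthogonal decomposition $dW=\rho\,dZ+\sqrt{1-\rho^2}\,d\widetilde{Z}$ and sketch the converse direction (failure of Feller at $0$ forces $\mathbb{E}[\eta_T]<1$), both of which the paper glosses over.
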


\section{Pricing of VIX options and futures}
In this section, we derive a general pricing formula for European call options and futures on the VIX when the index follows FSV process. First, the log contract can then be synthesized with a portfolio of call and put options in a continuum of strikes (Breeden Litzenberger) \cite{BL1978}, which leads to the VIX formula (CBOE, 2003) \cite{CBOE2003}. Thus, VIX squared can be expressed in terms of the risk-neutral expectation of the log contract. Different dynamics for the index price $S_t$ will result in various expressions for VIX squared. Hence, the squared VIX index is an approximation to the value of the log contract: 
\begin{equation}\label{eq3.1}
\mathrm{VIX}_t^2\approx-\frac{2}{\tau}\mathbb{E}^\mathbb{Q}\left[{\log \left({\frac{S_{t+\tau}}{S_te^{r\tau}}}\right)\bigg|\mathcal{F}_t}\right]\times100^2,
\end{equation}
with $\tau=\frac{30}{365}$ and $S_te^{r\tau}$ being forward price of S\&P500 observed at time $t$ with $t+\tau$ as maturity. The VIX formula involves two approximation errors:

\quad(1) an error if the price dynamics includes jumps: and

\quad(2) if options are only available for a finite number of strikes. 

In the following subsections, we go though Heston and 3/2 model briefly one after another and present the VIX derivatives pricing formulae in FSV model. The following Theorem \ref{theo3.2} derivation of VIX options pricing formula, which is an extension of Proposition 3.4 in Baldeaux and Badran \cite{BB2014}, also extends Proposition 1 in Zhang and Zhu \cite{ZZ2006}. 


\subsection{Heston stochastic volatility (HSV)}
Setting the parameter of volatility equals to $\frac{1}{2}$ and $\lambda_1=\lambda_2=0$ in equation \eqref{eq2.1}, we obtain the Heston model. The Heston model was first proposed by Heston and it has been extensively used and studies due to its tractability. In this specification of HSV, Lian
and Zhu \cite{LZ2013} show that the expectation in \eqref{eq3.1} can be computed explicitly: $\mathrm{VIX}_t^2=100^2\times\left(aV_t+b\right)$ where $a=\frac{1-e^{\kappa\tau}}{\kappa\tau}$ and $b=\theta\left[1-\frac{1-e^{\kappa\tau}}{\kappa\tau}\right].$ With the transitional probability density function (TPDF) $f^{\mathbb{Q}}_{V_T | V_t}({y})$ of CIR process, inversion of $\mathrm{VIX}_t^2=100^2\times\left(aV_t+b\right)$ gives us the TPDF of the VIX index $f^{\mathbb{Q}}_{\mathrm{VIX}_T | \mathrm{VIX}_t}(z)=\frac{2z}{a100^2}\cdot f^{\mathbb{Q}}_{V_T | V_t}\left(\frac{\frac{z^2}{100^2}-b}{a}\right)\mathds{1}_{\left\{z\ge100\sqrt{b}\right\}}.$ Hence, the price of a European call option is then found by computing the expected payoff directly as:
\begin{equation}
\text{C}\left(\mathrm{VIX}_t,K,t,T\right)=e^{-r\left(T-t\right)}\int_K^{\infty}\max\left(y-K,0\right)f^{\mathbb{Q}}_{\mathrm{VIX}_T | \mathrm{VIX}_t}(y)dy,
\end{equation}
while the VIX futures equal:
\begin{equation}
\text{F}\left(\mathrm{VIX}_t,t,T\right)=\mathbb{E}^{\mathbb{Q}}\left[\mathrm{VIX}_T | \mathcal{F}_t\right]=\int_0^{\infty}y\cdot f^{\mathbb{Q}}_{\mathrm{VIX}_T | \mathrm{VIX}_t}(y)dy.
\end{equation}

Before pricing VIX derivatives in FSV and 3/2 models, we need to establish the following Lemma \ref{lem3.2} first.

\begin{lemma}\label{lem3.2}
Let $X^x=\left\{ X_t^x,t \ge 0  \right\}$ denote the solution of the \eqref{eq2.2} (CIR) SDE and $X_0=x >0$ with $\kappa,\theta,\sigma >0$ and $2\kappa\theta \ge \sigma^2$ (Feller condition). Consider $\epsilon, \nu, \eta, \gamma \in \mathbb{R}$ such that
\begin{align}
\epsilon&>-\frac{\kappa^2}{2\sigma^2},\\
\nu&\ge -\frac{\left({\kappa\theta-\frac{\sigma^2}{2}}\right)^2}{2\sigma^2},\\
\eta&<\frac{\kappa\theta+\frac{\sigma^2}{2}+\sqrt{\left({\kappa\theta-\frac{\sigma^2}{2}}\right)^2+2\sigma^2\nu}}{\sigma^2},\\
\gamma&\ge-\frac{\sqrt{\kappa^2+2\epsilon\sigma^2}+\kappa}{\sigma^2}.
\end{align}
The following transform for the CIR process is well defined for all $t\ge0$ and is given by 
\begin{align}
\phi(t,x;\eta,\gamma,\epsilon,\nu)&=\mathbb{E}\left[ ({X_t^x})^{-\eta}\exp\left({-\gamma X_t^x-\epsilon \int_0^t X_t^x ds-\nu\int_0^t\frac{ds}{X_t^x}}\right)\right]\notag\\
&=\left({\frac{\beta({t,x})}{2}}\right)^{m+1}x^{-\frac{\kappa\theta}{\sigma^2}}({\gamma+K({t})})^{-\left({\frac{1}{2}+\frac{m}{2}-\eta+\frac{\kappa\theta}{\sigma^2}}\right)}\notag\\
&\quad\times e^{\frac{1}{\sigma^2}\left({\kappa^2\theta t-\sqrt{A}x\coth\left({\frac{\sqrt{A}t}{2}}\right)+\kappa x}\right)}\frac{\Gamma\left({\frac{1}{2}+\frac{m}{2}-\eta+\frac{\kappa\theta}{\sigma^2}}\right)}{\Gamma({m+1})}\notag\\
&\quad\times _{1}F_{1}\left({\frac{1}{2}+\frac{m}{2}-\eta+\frac{\kappa\theta}{\sigma^2},m+1,\frac{\beta({t,x})^2}{4({\gamma+K({t})})}}\right),
\end{align} 
with
\begin{align}
m&=\frac{2}{\sigma^2}\sqrt{\left({\kappa\theta-\frac{\sigma^2}{2}}\right)^2+2\sigma^2\nu},\\
A&=\kappa^2+2\sigma^2\epsilon,\\
\beta({t,x})&=\frac{\sqrt{Ax}}{\frac{\sigma^2}{2}\sinh\left({\frac{\sqrt{A}t}{2}}\right)},\\
K({t})&=\frac{1}{\sigma^2}\left({\sqrt{A}\coth\left({\frac{\sqrt{A}t}{2}}\right)+\kappa}\right).
\end{align}

If
\begin{equation}
\gamma<-\frac{\sqrt{\kappa^2+2\epsilon\sigma^2}+\kappa}{\sigma^2},
\end{equation}
then the transform is well defined for all $t<t^*$, with
\begin{equation}
t^*=\frac{1}{\sqrt{A}}\log\left({1-\frac{2\sqrt{A}}{\kappa+\sigma^2\gamma+\sqrt{A}}}\right).
\end{equation}
\end{lemma}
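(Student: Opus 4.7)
The plan is to recognize $\phi$ as a Feynman--Kac functional for the CIR diffusion with killing potential $V(x) = \epsilon x + \nu/x$, remove that potential by a two-parameter change of the unknown, and then evaluate the resulting weighted Laplace transform against the non-central chi-squared law of the terminal value of a modified CIR process. Concretely, $u(t,x) := \phi(t,x;\eta,\gamma,\epsilon,\nu)$ satisfies the backward equation
$$u_t = \frac{\sigma^2 x}{2} u_{xx} + \kappa(\theta-x)u_x - (\epsilon x + \nu/x)\,u,\qquad u(0,x) = x^{-\eta}e^{-\gamma x},$$
by Feynman--Kac, and I would substitute the ansatz $u(t,x) = x^a e^{-bx} e^{ct} v(t,x)$.

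The constants $a,b,c$ are chosen so that the potential is eliminated and $v$ solves a pure Feynman--Kac equation for a modified CIR process. Cancelling the $x^{-1}$ coefficient forces $\frac{\sigma^2}{2}a(a-1) + \kappa\theta a = \nu$, whose admissible root satisfies $2a + 2\kappa\theta/\sigma^2 = m+1$; the condition $\nu \ge -(\kappa\theta-\sigma^2/2)^2/(2\sigma^2)$ is exactly what makes the discriminant non-negative, and $m+1$ is then the effective Bessel dimension. Cancelling the linear-in-$x$ coefficient forces $\frac{\sigma^2}{2}b^2 + \kappa b = \epsilon$, giving $b = (\sqrt{A}-\kappa)/\sigma^2$, which requires $\epsilon > -\kappa^2/(2\sigma^2)$. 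The leftover $x$-independent term fixes $c$. Under these choices, $v(t,x) = \mathbb{E}_x[\tilde X_t^{-(a+\eta)}\exp(-(\gamma-b)\tilde X_t)]$, where $\tilde X$ is a CIR process with parameters $\tilde\kappa = \sqrt A$ and $\tilde\kappa\tilde\theta = \sigma^2 a + \kappa\theta$, still satisfying Feller's condition because of $2\kappa\theta\ge\sigma^2$.

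To finish, I would evaluate this weighted Laplace transform using the fact that $\tilde X_t$ is a scaled non-central chi-squared random variable whose density is expressible via a modified Bessel function of order $m/2$. Integrating $y^{-(a+\eta)}e^{-(\gamma-b)y}$ against this density, expanding the Bessel kernel as a power series, and evaluating each resulting Gamma integral reassembles a confluent hypergeometric series, producing the stated $_1F_1$ factor together with the Gamma ratio $\Gamma(\tfrac12+\tfrac m2-\eta+\kappa\theta/\sigma^2)/\Gamma(m+1)$. The remaining prefactors $\beta(t,x)^{m+1}$, $x^{-\kappa\theta/\sigma^2}$, $(\gamma+K(t))^{-(1/2+m/2-\eta+\kappa\theta/\sigma^2)}$, and the exponential $e^{\kappa^2\theta t/\sigma^2}$ coming from $c$ should then reassemble correctly. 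The integrability hypotheses on $\eta$ and $\gamma$ are the conditions for convergence of this final integral near $y=0$ and $y=\infty$ respectively, while the bound $\eta < (\kappa\theta+\sigma^2/2+\sqrt{(\kappa\theta-\sigma^2/2)^2+2\sigma^2\nu})/\sigma^2$ is precisely $a+\eta < (m+1)/2$ which keeps $y^{-(a+\eta)}$ integrable against the chi-squared density.

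For the last clause, when $\gamma<-(\sqrt A+\kappa)/\sigma^2$ the effective Laplace argument $\gamma+K(t)$ starts positive at $t=0$ (since $K(0^+)=+\infty$) but crosses zero at a finite time $t^*$, past which the Laplace integral diverges. Solving $\gamma+K(t^*)=0$, i.e. $\coth(\sqrt A t^*/2) = -(\sigma^2\gamma+\kappa)/\sqrt A$, and inverting with the identity $\coth(y)=(e^{2y}+1)/(e^{2y}-1)$, yields the stated explicit $t^*$. The main obstacle I anticipate is not any single step but the algebraic bookkeeping: tracking how the triple $(a,b,c)$ simultaneously shifts $\eta$, $\gamma$, and the Bessel index through the ansatz, and then matching every prefactor against the closed form in the lemma, is tedious and the most error-prone part of the argument.
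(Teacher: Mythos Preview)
Your approach is correct and genuinely different from the paper's. The paper does not actually prove this lemma; it simply cites Theorem~1 of Grasselli (2016) and sketches that Grasselli's argument proceeds by Lie symmetry analysis of the PDE $u_t = \tfrac{\sigma^2 x}{2}u_{xx} + (\kappa\theta-\kappa x)u_x - (\epsilon x + \nu/x)u$: one computes the point symmetries, uses them to construct a fundamental solution, and recovers the transform by inverting a Laplace integral (with Craddock--Lennox supplying the conditions under which the fundamental solution is a transition density). Your route is more elementary and more probabilistic: the gauge transform $u = x^{a}e^{-bx}e^{ct}v$ strips off the killing potential and leaves a pure CIR backward equation with modified parameters $\tilde\kappa=\sqrt{A}$, $\tilde\kappa\tilde\theta=\sigma^2 a+\kappa\theta$, after which the non-central chi-squared law of $\tilde X_t$ lets you integrate $y^{-(a+\eta)}e^{-(\gamma-b)y}$ against the Bessel kernel termwise to assemble the $_1F_1$ series. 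What your approach buys is transparency---each parameter constraint surfaces naturally as a discriminant or a convergence condition---at the cost of having to guess the ansatz; the Lie-symmetry machinery is heavier but systematic and requires no such guess. One minor slip: since the modified CIR density behaves like $y^{m}$ near the origin (the Bessel index is $q=2\tilde\kappa\tilde\theta/\sigma^2-1=m$), the integrability condition at zero is $a+\eta<m+1$, not $(m+1)/2$; this is indeed equivalent to the stated bound on $\eta$ and does not affect the argument.
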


\begin{remark}\label{rem3.1}
\textit{Special case}:
when $\epsilon=\nu=\gamma=0$, we have the (non-integral) moments of the process for $\eta < \frac{2\kappa\theta}{\sigma^2}$:
\begin{align*}
\mathbb{E}\left[{X_t^{-\eta}}\right]=&\left({\frac{\kappa}{\sigma^2}}\right)^{\eta}\left({\sinh\left({\frac{\kappa t}{2}}\right)}\right)^{-\frac{2\kappa\theta}{\sigma^2}}\exp\left(\frac{\kappa}{\sigma^2}\left({\kappa\theta t+x-x\coth{\left({\frac{\kappa t}{2}}\right)}}\right)\right)\\
&\times\left({1+\coth\left({{\frac{\kappa t}{2}}}\right)}\right)^{\eta-\frac{2\kappa\theta}{\sigma^2}}\frac{\Gamma\left({\frac{2\kappa\theta}{\sigma^2}-\eta}\right)}{\Gamma\left({\frac{2\kappa\theta}{\sigma^2}}\right)}\ _{1}F_1\left({\frac{2\kappa\theta}{\sigma^2}-\eta,\frac{2\kappa\theta}{\sigma^2},\frac{2\kappa x}{\sigma^2({e^{\kappa t}-1})}}\right)\\
=&G(\kappa,\theta,\sigma,\eta;t,x)
\end{align*}
\end{remark}

\begin{proof}[Proof of Lemma \ref{lem3.2}]
The result follows immediately from Theorem 1 in Grasselli \cite{G2016}, whose proof mainly relies on Lie's classical symmetry method as in Bluman and Kumei \cite{BK1989} and Olver \cite{O1993}.We first note that by standard arguments the expectation is related to the solution of the following symmetrical PDE:
\begin{equation}
u_t=\frac{1}{2}\sigma^2xu_{xx}+f(x)u_x-\left({\frac{\nu}{x}+\epsilon x}\right)u,\quad \epsilon>0,\ \nu>0,
\end{equation}
where $f(x)=\kappa\theta-\kappa x$. The key result in order to find the Lie groups admitted by the PDE states that one should find the invariant surface for the second prolongation of group acting on the $({x;t;u})$-space where the solutions of the PDE lie. Once such equations are solved, one can find the corresponding Lie group admitted by the PDE and thus find a fundamental solution of the PDE by inverting a Laplace transform. Finally, Craddock and Lennox \cite{CL2009} showed the condition under which the fundamental solution is also a transition probability density for the underlying stochastic process. For more details, see Grasselli \cite{G2016}.
\end{proof}

\subsection{3/2 stochastic volatility with jumps in price (3/2-SVJ)}
The Heston model struggles to incorporate the smile in the implied volatilities of short-term index options. Baldeaux and Badran show that unlike for 3/2 model, the implied volatilities are downward sloping, which is not consistent with market data. Finally, they found that 3/2 plus jumps model is able to better fit short-term index implied volatilities while producing more realistic VIX option implied volatilities without a loss in tractability. Setting the volatility parameter $\alpha$ equals to $-\frac{1}{2}$ in equation \eqref{eq2.1}, one arrives at 3/2 plus jump model in equation \eqref{eq2.1}. After some computation process, the expectation in \eqref{eq3.1} can be computed as
\begin{align*}
\mathrm{VIX}_t^2&=\left(\frac{1}{\tau}\mathbb{E}^{\mathbb{Q}}\left[\int_t^{t+\tau}V_t^{-1}dt\right]+2\left(\lambda_1\left(\tilde{\mu}_1-\mu_1\right)+\lambda_2\left(\tilde{\mu}_2-\mu_2\right)\right)\right)\times100^2,t\ge0,\\
&=\left(\frac{1}{\tau}\int_0^{\tau}G(\kappa,\theta,\sigma,1;u,x)du+G_{0}(\mu_1,\mu_2,\tilde{\mu}_1,\tilde{\mu}_2,\lambda_1,\lambda_2)\right)\times100^2.
\end{align*}
%
Hence, the price of a European call option in 3/2 stochastic volatility with jumps is then found by computing the expected payoff directly as:
\begin{align}
\text{C}\left(\mathrm{VIX}_t,K,t,T\right)=&e^{-r\left(T-t\right)}\mathbb{E}^{\mathbb{Q}}\left[\left(\mathrm{VIX}_T-K\right)^{+}|\mathcal{F}_t\right]\notag\\
=&e^{-r\left(T-t\right)}\int_0^{\infty}\left(100\sqrt{\left(\frac{1}{\tau}\int_0^{\tau}Gdu+G_{0}\right)}-K\right)^{+}\times f_{V_T | V_t}^{\mathbb{Q}}(y)dy,\notag\\
\end{align}
while the VIX futures equals:
\begin{align}
\text{F}\left(\mathrm{VIX}_t,K,t,T\right)=&e^{-r\left(T-t\right)}\mathbb{E}^{\mathbb{Q}}\left[\mathrm{VIX}_T|\mathcal{F}_t\right]\notag\\
=&e^{-r\left(T-t\right)}\int_0^{\infty}\left(100\sqrt{\left(\frac{1}{\tau}\int_0^{\tau}Gdu+G_{0}\right)}\right)f_{V_T | V_t}^{\mathbb{Q}}(y)dy
\end{align}

\subsection{Free stochastic volatility and jumps in price (FSV-type)}
To capture the free behavior of volatility, we will relax the power parameter of instantaneous variance, instead of fixing $\alpha$ to be $\frac{1}{2}$ or $-\frac{1}{2}$ as usual. We do not restrict the sign of $\alpha$, letting the data speak as to its direction. 

\begin{theorem}\label{theo3.2}
Let S, V, and $\mathrm{VIX}^2$ be defined by Equation \eqref{eq2.1}, \eqref{eq2.2} and \eqref{eq3.1}. Then 
\begin{equation}
\mathrm{VIX}_t^2=100^2\times\left(H_1+\int_0^{\tau}H_2du\right)
\end{equation}
where
\begin{align}
H_1(\lambda_1,\mu_1,\lambda_2,\mu_2)=&2\left[\lambda_1\left(\tilde{\mu}_1-\mu_1\right)+\lambda_2\left(\tilde{\mu}_2-\mu_2\right)\right],\\
H_2(\kappa,\theta,\sigma,\alpha;u,x)=&\frac{\sigma^{4\alpha}}{\tau\kappa^{2\alpha}}\frac{\Gamma\left({\frac{2\kappa\theta}{\sigma^2}+2\alpha}\right)}{\Gamma\left({\frac{2\kappa\theta}{\sigma^2}}\right)}\left({\sinh{\left({\frac{ku}{2}}\right)}}\right)^{-\frac{2\kappa\theta}{\sigma^2}}\exp\left({\frac{\kappa}{\sigma^2}\left({\kappa\theta u+x-x\coth\left({\frac{\kappa u}{2}}\right)}\right)}\right)\notag\\
&\times\left({1+\coth\left({\frac{\kappa u}{2}}\right)}\right)^{-2\alpha-\frac{2\kappa\theta}{\sigma^2}}\ _{1}F_{1}\left({\frac{2\kappa\theta}{\sigma^2}+2\alpha,\frac{2\kappa\theta}{\sigma^2},\frac{2\kappa x}{\sigma^2({e^{\kappa u}-1})}}\right).
\end{align}

\begin{proof}
From equation \eqref{eq2.3} and \eqref{eq3.1}, we get 
\begin{align}\label{eq3.17}
\frac{\mathrm{VIX}_t^2}{100^2}=&-\frac{2}{\tau}\mathbb{E}^{\mathbb{Q}}\left[\left(-\lambda_1\tilde{\mu}_1-\lambda_2\tilde{\mu}_2\right)\tau-\frac{1}{2}\int_t^{t+\tau}V_s^{2\alpha}ds+\sum_{i=N_1(t)+1}^{N_1(t+\tau)}J_{1i}^{\mathbb{Q}}+\sum_{i=N_2(t)+1}^{N_2(t+\tau)}J_{2i}^{\mathbb{Q}}\bigg|\mathcal{F}_t\right]\notag\\
=&2\left(\lambda_1\tilde{\mu}_1+\lambda_2\tilde{\mu}_2\right)+\frac{1}{\tau}\int_t^{t+\tau}\mathbb{E}^{\mathbb{Q}}\left[V_s^{2\alpha}|\mathcal{F}_t\right]ds-\frac{2}{\tau}\mathbb{E}^{\mathbb{Q}}\left[\sum_{i=N_1(t)+1}^{N_1(t+\tau)}J_{1i}^{\mathbb{Q}}+\sum_{i=N_2(t)+1}^{N_2(t+\tau)}J_{2i}^{\mathbb{Q}}\bigg|\mathcal{F}_t\right]\notag\\
=&2\left[\lambda_1\left(\tilde{\mu}_1-\mu_1\right)+\lambda_2\left(\tilde{\mu}_2-\mu_2\right)\right]+\frac{1}{\tau}\int_0^{\tau}\mathbb{E}^{\mathbb{Q}}\left[V_s^{2\alpha}|\mathcal{F}_t\right]ds\notag\\
=&2\left[\lambda_1\left(\tilde{\mu}_1-\mu_1\right)+\lambda_2\left(\tilde{\mu}_2-\mu_2\right)\right]+\int_0^{\tau}\frac{\sigma^{4\alpha}}{\tau\kappa^{2\alpha}}\frac{\Gamma\left({\frac{2\kappa\theta}{\sigma^2}+2\alpha}\right)}{\Gamma\left({\frac{2\kappa\theta}{\sigma^2}}\right)}\left({\sinh{\left({\frac{ku}{2}}\right)}}\right)^{-\frac{2\kappa\theta}{\sigma^2}}\notag\\
&\times\exp\left({\frac{\kappa}{\sigma^2}\left({\kappa\theta u+x-x\coth\left({\frac{\kappa u}{2}}\right)}\right)}\right)\notag\\
&\times\left({1+\coth\left({\frac{\kappa u}{2}}\right)}\right)^{-2\alpha-\frac{2\kappa\theta}{\sigma^2}}\ _{1}F_{1}\left({\frac{2\kappa\theta}{\sigma^2}+2\alpha,\frac{2\kappa\theta}{\sigma^2},\frac{2\kappa x}{\sigma^2({e^{\kappa u}-1})}}\right)ds\notag\\
=&H_1(\lambda_1,\mu_1,\lambda_2,\mu_2)+\int_0^{\tau}H_2(\kappa,\theta,\sigma,\alpha;u,x)du
\end{align}
and where we used the remark \ref{rem3.1} by setting $\eta$ to be $2\alpha$ and the fact that $V_t$ is Markov process. 
\end{proof}
\end{theorem}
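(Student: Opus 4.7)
The plan is to unpack the definition of $\mathrm{VIX}_t^2$ in \eqref{eq3.1} by substituting the explicit log-price obtained from integrating the jump-diffusion SDE \eqref{eq2.1}, and then pass the conditional expectation through each resulting summand. From \eqref{eq2.3}, one can write
\[
\log\frac{S_{t+\tau}}{S_t e^{r\tau}} = -(\lambda_1\tilde{\mu}_1+\lambda_2\tilde{\mu}_2)\tau - \frac{1}{2}\int_t^{t+\tau}V_s^{2\alpha}\,ds + \int_t^{t+\tau}V_s^{\alpha}\,dW_s^{\mathbb{Q}} + \sum_{i=N_1(t)+1}^{N_1(t+\tau)}J_{1i}^{\mathbb{Q}} + \sum_{i=N_2(t)+1}^{N_2(t+\tau)}J_{2i}^{\mathbb{Q}}.
\]
Taking $\mathbb{E}^{\mathbb{Q}}[\,\cdot\,|\mathcal{F}_t]$ term by term and multiplying by $-2/\tau$ will produce the claimed four-piece decomposition.

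The drift constants require no work. The Brownian stochastic integral $\int_t^{t+\tau}V_s^{\alpha}\,dW_s^{\mathbb{Q}}$ vanishes under $\mathbb{E}^{\mathbb{Q}}[\,\cdot\,|\mathcal{F}_t]$; I would justify this exactly as in the preceding Proposition, using the Feller condition $2\kappa\theta\ge\sigma^2$ to guarantee that $V$ is non-explosive and $V^{\alpha}$ square-integrable on $[t,t+\tau]$, so the integral is a true martingale. Each jump sum is a compound Poisson random variable independent of $\mathcal{F}_t$; by Wald's identity together with $\mathbb{E}^{\mathbb{Q}}[J_{ji}^{\mathbb{Q}}]=\mu_j$, one obtains $\mathbb{E}^{\mathbb{Q}}\bigl[\sum_{i=N_j(t)+1}^{N_j(t+\tau)}J_{ji}^{\mathbb{Q}}\big|\mathcal{F}_t\bigr]=\lambda_j\tau\mu_j$ for $j=1,2$. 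Assembling the constants yields exactly $H_1 = 2[\lambda_1(\tilde{\mu}_1-\mu_1)+\lambda_2(\tilde{\mu}_2-\mu_2)]$.

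The core step is the evaluation of $\frac{1}{\tau}\int_t^{t+\tau}\mathbb{E}^{\mathbb{Q}}[V_s^{2\alpha}\mid\mathcal{F}_t]\,ds$. After a Fubini interchange (justified by non-negativity of the integrand) and the substitution $u=s-t$, the Markov property of $V$ reduces this to $\frac{1}{\tau}\int_0^{\tau}\mathbb{E}[X_u^{2\alpha}\mid X_0=x]\,du$ with $x=V_t$. I then invoke Remark \ref{rem3.1} (the $\epsilon=\nu=\gamma=0$ specialization of Lemma \ref{lem3.2}) with $\eta=-2\alpha$; the admissibility bound $\eta<2\kappa\theta/\sigma^2$ becomes $-2\alpha<2\kappa\theta/\sigma^2$, which is ensured by the standing assumption $2\kappa\theta/\sigma^2>1-\alpha$ together with $\alpha\in[-1/2,3/2]$. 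Substituting the closed-form moment and rewriting $(\kappa/\sigma^2)^{-2\alpha}=\sigma^{4\alpha}/\kappa^{2\alpha}$ produces precisely the integrand $H_2(\kappa,\theta,\sigma,\alpha;u,x)$ displayed in the theorem.

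The main technical subtlety I anticipate is the parallel justification of (i) the vanishing of the Brownian integral under expectation and (ii) the Fubini swap, both of which rely on the integrability and non-explosion properties of $V^{\alpha}$ established in Section 2. Everything else amounts to algebraic bookkeeping to assemble the four pieces into $H_1+\int_0^{\tau}H_2\,du$, which after the final multiplication by $100^2$ produces the claimed identity.
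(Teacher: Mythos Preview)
Your proposal is correct and follows essentially the same route as the paper: expand $\log(S_{t+\tau}/S_te^{r\tau})$ via \eqref{eq2.3}, take conditional expectation term by term, evaluate the compound Poisson sums, and then apply Remark~\ref{rem3.1} to the CIR moment $\mathbb{E}[V_u^{2\alpha}\mid V_0=x]$ after a Markov/Fubini reduction. You are in fact more careful than the paper---you explicitly justify the vanishing of the Brownian integral, the Fubini swap, and the admissibility check $-2\alpha<2\kappa\theta/\sigma^2$, and you correctly write $\eta=-2\alpha$ (the paper's phrase ``setting $\eta$ to be $2\alpha$'' is a small slip, since Remark~\ref{rem3.1} gives $\mathbb{E}[X_t^{-\eta}]$).
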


It's worthwhile to note that equation \eqref{eq3.17} is useful as it shows that the distribution of $\mathrm{VIX}_t^2$ can be obtained via the distribution of $V_t$, for $t\ge0$. That's to say, the problem of pricing VIX derivatives is solved to the problem of finding the transition density function for the variance process. By using equation \eqref{eq2.3}, the price of a European call option in free stochastic volatility model is then found by computing the expected payoff directly as:
\begin{align}
\text{C}\left(\mathrm{VIX}_t,K,t,T\right)=&e^{-r\left(T-t\right)}\mathbb{E}^{\mathbb{Q}}\left[\left(\mathrm{VIX}_T-K\right)^{+}|\mathcal{F}_t\right]\notag\\
=&e^{-r({T-t})}\int_0^\infty\left({100\ \sqrt{H_1+\int_0^\tau H_2du}-K}\right)^+\cdot f^{Q}_{V_T | V_t}({y})dy,
\end{align}
while the VIX futures equals:
\begin{align}
\text{F}\left(\mathrm{VIX}_t,K,t,T\right)=&e^{-r\left(T-t\right)}\mathbb{E}^{\mathbb{Q}}\left[\mathrm{VIX}_T|\mathcal{F}_t\right]\notag\\
=&e^{-r({T-t})}\int_0^\infty100\ \sqrt{H_1+\int_0^\tau H_2du}\cdot f^{Q}_{V_T | V_t}({y})dy.
\end{align}

Four volatility considered models can be nested within the Eq.\ref{eq2.1} and Eq.\ref{eq2.2}. Among those, we select four different specifications depending on whether there are restrictions on $\alpha$, $\lambda_1$, or $\lambda_2$. The model specifications considered in this paper are summarized in Table \ref{Table 0}.

\begin{table}
\renewcommand\arraystretch{1.2}
\caption{\\Summary of model specifications}
\label{Table 0}
\centering
\begin{tabular*}{\textwidth}{@{\extracolsep{\fill}}llllllllll}
\hline
Model & & \multicolumn{4}{l}{Description} & & \multicolumn{3}{l}{Constraints}\\ 
\hline
HSV& & \multicolumn{4}{l}{Fixed volatility, no jumps} & & \multicolumn{3}{l}{$\alpha=\frac{1}{2}$, $\lambda_1=0$, and $\lambda_2=0$}\\
3/2-SVJ& & \multicolumn{4}{l}{Fixed volatility, upward and downward jumps} & & \multicolumn{3}{l}{$\alpha=-\frac{1}{2}$}\\
FSV-AJ& & \multicolumn{4}{l}{Free volatility, upward and downward jumps} & & \multicolumn{3}{l}{Not applicable}\\
FSV-DJ& & \multicolumn{4}{l}{Free volatility, downward jumps only} & & \multicolumn{3}{l}{$\lambda_1=0$}\\
\hline
\end{tabular*}
\end{table}

\section{Data description}
Our application looks at the valuation of VIX futures and options, among other volatility derivatives. The market for those derivatives have been explosive growth in trading activity in recent years, as can be see in Fig.~\ref{Fig.1}. The left panel of the figure presents the number of VIX future contract traded increased dramatically from about 0.4 million in 2006 to about 51 million in 2015, and that most of the growth occurred after 2009, likely provoked by the financial crisis. Fig.~\ref{Fig.1}(a) also indicates that VIX futures reflect a demand for a tradable vehicle which can be used to hedge or to implement a view on  volatility. The right panel of the figure shows that the dollar trading volume for the VIX options also increased substantially from around 5 million in 2006 to near 150 million in 2015. 
\begin{figure}[htpb]
\caption{\\Trading volume of VIX futures and options. The left panel shows the numbers of futures contracts traded over time, and the right panel presents the dollar trading volumes for options contracts over time}
\label{Fig.1}
\subfigure[Futures Trading Volume]{\label{Futures Trading Volume}\includegraphics[width=0.45\textwidth]{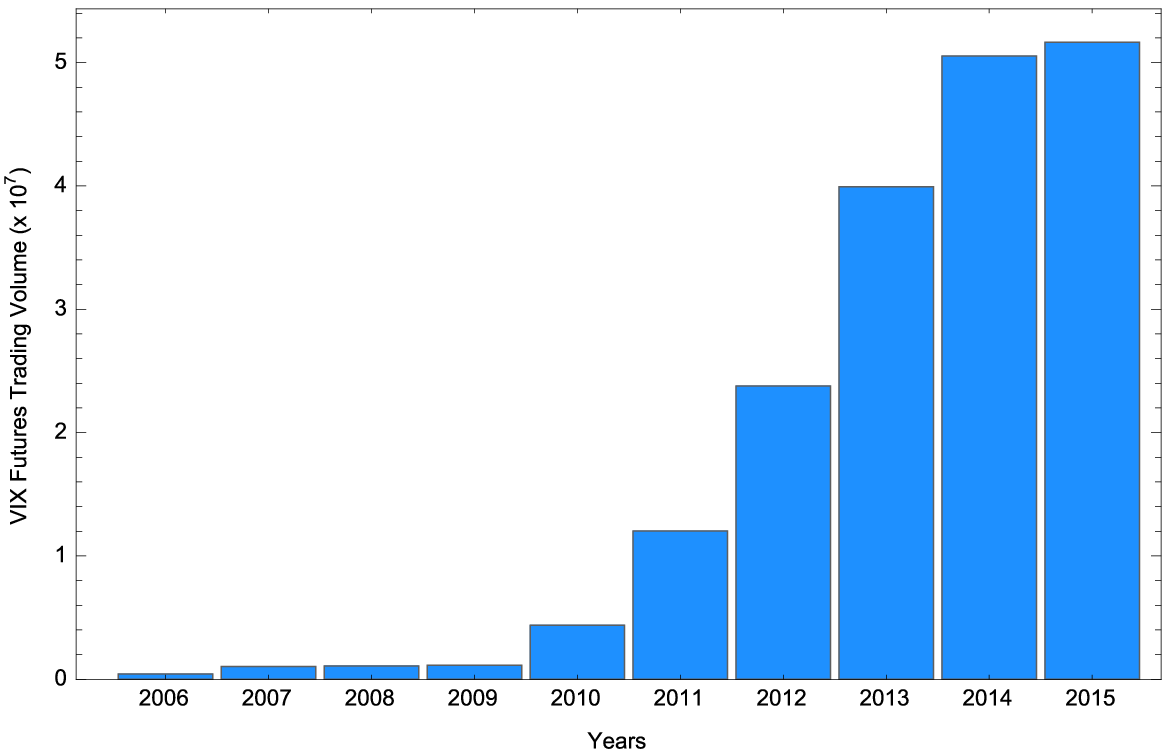}}
\subfigure[Options Trading Volume]{\label{Options Trading Volume}\includegraphics[width=0.45\textwidth]{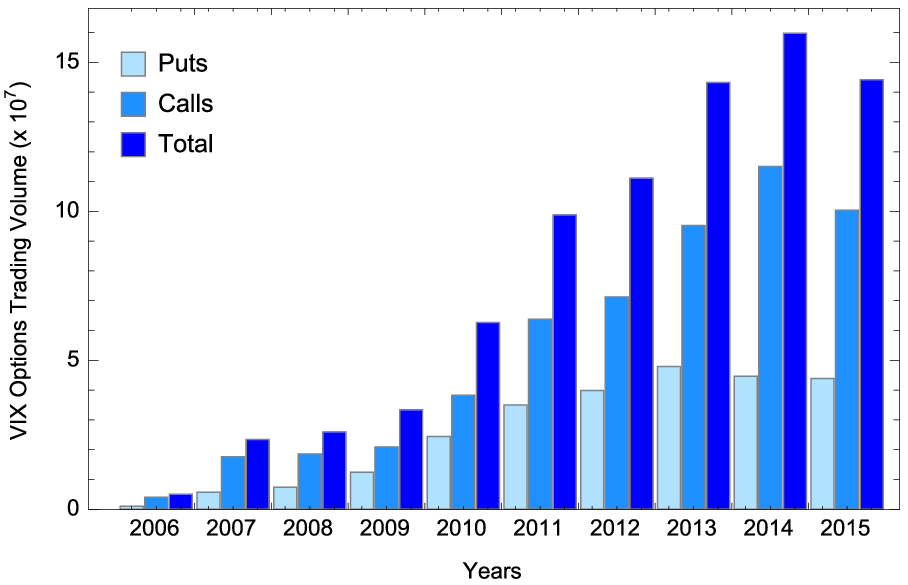}}
\end{figure}
Note that VIX call options are more actively traded than VIX put option, which may be associated with the fact that former can be used as a means of hedging a stock market crash unlike the latter. 

Our data set used in this study comprises VIX index, the corresponding VIX futures and options traded on the CBOE, based on the consideration that the option written on this index is one of the most actively traded contracts. We have tried to include the dates for which data on both VIX futures and options are available. In this framework, the future sample contains 193 futures with a total of 23 trading days and available maturities from 1 day to 268 days while the option sample contains 872 call options, with a total of 10 trading days and available maturities from 1 day to 169 days. Furthermore, our in-sample data employed the delayed market quotes over the period starts on March 1, 2016, and ends on March 20, 2016, with 9 to 10 maturities for each trading day, to calibrate the risk-neutral parameters. Besides, we use those from March 21, 2016 to March 31, 2016 for the out-of-sample test. All details show in Table \ref{Table 2}. On the other hand, in this table, we also break down the data into categories depending on time to maturity $\tau$ and their moneyness. Time to maturity $\tau$ contains short-term contracts with $\tau\le$ one month, middle-term contracts with one month $<\tau\le$ three months and long-term contracts with $\tau>$ three months. Their moneyness is divided into three categories: Out-of-the-Money (OTM) with $d<-0.1$, At-the-money (ATM) with $-0.1\le d\le0.1$ and In-the-money (ITM) with $0.1<d$, where moneyness is defined as $d=\text{VIX}/K$. For each category, we describe the corresponding sample size, average price and average implied volatility. Finally, we just calculate Black-Scholes implied volatility when options are in ITM, ATM and OTM. Option prices are taken from the bid-ask midpoint. To ensure sufficient liquidity and alleviate the influence of price discreetness during the valuation, a filtering scheme is applied to eliminate those inaccurate options by discarding options which the mid price is less than \$2.0 in all data set. Note that the closing hour of the options and the VIX are the same, thus there is no nonsynchronous issue here. Finally, The average one-month US Treasury Bond Rate, 0.05\%, in the period from March 1, 2016 to March 31, 2016 is chosen to be the risk free interest rate.

\begin{table}
\renewcommand\arraystretch{1.1}
\caption{\\Summary statistics for overall sample. The moneyness is defined as $d=\text{VIX}/K$, where K is the strike of option. Short-term contracts are those with no more than one month to maturity, middle-term contracts are those not only more than one month but also no more than three months to maturity and long-term contracts are those more than three months to maturities. Detail descriptions of VIX option data show that the reported numbers are respectively the average option price and the numbers of observations, which are shown in the parenthesis, for the in-sample, out-of-sample. The number of observations for the overall sample and for each moneyness category from Mar. 1, 2016 to Mar. 31,2016. BSIV stands for Black-Scholes implied volatility. OTM, ATM, ITM denote Out-of-the-Money, At-the-Money, In-the-Money options, respectively.}
\label{Table 2}
\centering
\begin{tabular*}{\textwidth}{@{\extracolsep{\fill}}llllll}
\hline
              &             &\multicolumn{4}{c}{Time to maturities}\\ \cline{3-6}
              &             &Short   &   Middle   &  Long &   Total \\
\hline
All futures & No.of futures &    21           &    43      &129       &193   \\
                 & Average price&      17.47    & 19.23    &  20.95  &  20.18 \\
                 &                        &       &     &     &    \\ 
                 
All options& No.of options &     314  & 222    & 336    & 872   \\
                 & Average price &     5.85  &5.41     &4.41     &5.18    \\    
                                  &                        &       &     &     &    \\ 
                                  
OTM options & No.of options &  9     &  61   &   219  & 289   \\
$(d<-0.1)$             & Average price &   2.50    &  2.81   &  3.16   &   3.07 \\
                           & Average BSIV &   0.79    &   0.72  &  0.65   &    0.67\\
                                            &                        &       &     &     &    \\ 
                                                                                        
ATM options       & No.of options &     51      & 48       & 64    & 163   \\
$(-0.1\le d\le0.1)$ & Average price &    3.45   &   4.53  &  5.41   &    4.54\\
                           & Average BSIV &    0.73   &  0.57   &   0.53  &  0.61  \\ 
                                            &                        &       &     &     &    \\ 
                                                                                
ITM options& No.of options &      254 &   113  & 53    &  420    \\
 $(0.1< d)$  & Average price &     6.45  & 7.19    & 8.34    &  6.89  \\     
                    & Average BSIV &    0.87   &  0.57   &   0.46  &  0.73  \\                                                                                             
\hline
\end{tabular*}

\begin{tabular*}{\textwidth}{@{\extracolsep{\fill}}lp{0.9em}lp{0.9em}lp{0.9em}lp{0.9em}lp{0.9em}lp{0.9em}l}
\hline
Detail descriptions of VIX option & &  Total & & \multicolumn{5}{c}{Moneyness $d=\text{VIX}/K$}\\ 
\cline{5-9}
& & & & OTM & &ATM & &ITM \\
& & & & $\le-0.1$ & & $(-0.1,0.1)$ & & $\ge0.1$ \\
\hline
Mar. 1-Mar. 20, 2016 & & \$5.17 & & $\$2.98$ & & $\$4.11$ & & $\$6.77$ \\
                                   & &  (669)    & &      (199)   & &    (119)      & &   (351)\\
Mar. 21-Mar. 31, 2016 & & \$5.24 & & $\$3.27$ & & $\$5.71$ & & $\$7.53$ \\
                                   & &  (203)    & &      (90)   & &    (44)      & &   (69)\\
\hline

\end{tabular*}

\end{table}

\section{Parameter estimates and methodology}
\subsection{Estimation Methodology}
The estimation procedure begins by minimizing the ``loss function", separately for each model and each given day, to derive the risk neutral parameters. As the VIX index and VIX options both contain information about the future dynamics of the VIX index, the following loss functions contain VIX index and VIX options, respectively. $$\text{VIX}Loss=\frac{1}{N_1}\sum_{n=1}^{N_1}\frac{\abs{\text{VIX}_n-\widehat{\text{VIX}}_n}}{\text{VIX}_n}$$ and $$OptionLoss=\frac{1}{N_2}\sum_{n=1}^{N_2}\frac{\abs{C_n-\widehat{C}_n}}{C_n}$$ where $N_{i}, i=1,2$ is the number of the sample data, $\text{VIX}_n$, $\widehat{\text{VIX}}_n$, $C_n$ and $\widehat{C}_n$ represent the market VIX index, model VIX index, the market option price and model option price, respectively. In this section, we adopt a gradient-based minimization algorithm and a local optimization scheme to minimize those loss functions. Note that loss functions are non-linear optimization problem, which can lead to different optimal parameters for different starting parameters. Therefore, the success of the scheme depends on effective choices of the initial parameters. To address this problem, we at first minimize the $\text{VIX}Loss$ function by running calibration 40 times with reasonable parameters which are chosen from the previous literature or GMM estimation. Then, we record a set of VIX-optimum parameters as preparations. Next step, we use those VIX-optimum parameters as initial parameters and then search for news estimates by minimizing $OptionLoss$ function. In all the calibration, the Feller's condition $\frac{2\kappa\theta}{\sigma^2}>1$ and non-explosion condition $\frac{2\kappa\theta}{\sigma^2}>1-\alpha$ are imposed. In addition, we impose additional calculation on updating Hessian matrix approximation at each iteration by using BFGS formula in quasi-Newton method. Hence standard deviation of each parameter can be computed by inverse of Hessian matrix in each iteration. Finally, we will examine whether FSV and jumps help to match the observed term structure of the implied volatility, on average, which will give us a hint as to the pricing performance results discussed in the section that follows.

\begin{table}
\renewcommand\arraystretch{1.2}
\caption{\\Estimated of Risk-Neutral Parameters. Parameters estimates are obtained by minimizing the lossfunction over the period March 1, 2016 to March 20, 2016, followed by its standard error in parenthesis.. The last two rows, labeled $\text{VIX}Loss$ and $OptionLoss$, show the mean absolute percentage errors of VIX ($\text{VIX}Loss$) and the mean absolute percentage errors for all options ($OptionLoss$).}
\label{Table 3}
\centering
\begin{tabular*}{\textwidth}{@{\extracolsep{\fill}}lllllllll}
\hline
Parameters &   & FSV-AJ  &  &FSV-DJ  &  & 3/2-SVJ&  &HSV    \\ 
\hline
$\kappa$   &   &3.8943   &  &3.7029  &  &2.4614 &  & 3.1490  \\
                  &   &(0.010)  &  & (0.010)  &  & (0.009) &  &(0.003)\\
$\theta$   &   & 0.2121  &  &0.2036  &  &47.313 &  & 0.0372  \\
                  &   &  (0.041)&  & (0.039) &  & (0.721) &  &(0.014)\\
$\sigma$    &   & 0.9115  &  &0.8662  &  &-11.0750 &  & 1.0880  \\
                  &   & (0.016)&  & (0.012)&  & (0.118) &  &(0.031)\\
$\alpha$   &   & 1.2156  &  & 1.1575 &  & &  &  \\
                  &   &  (0.143)&  &  (0.116) &  &  &  &\\
$\lambda_1$ &   & 0.0574  &  &  &  & 0.0722  & &   \\
                  &   & (0.037) &  &   &  &(0.002)  &  &\\
$\mu_1$       &   &   0.1125&  &  &  & 0.1518&   &  \\
                  &   &  (0.024)&  &   &  &  (0.032)&  &\\
$\lambda_2$ &   & 0.0648  &  &0.0668  & &0.1203  & &  \\
                  &   &  (0.022)&  & (0.030)   & & (0.026) &  &\\
$\mu_2$ &   &  -0.1232   &  &-0.1233  & &-0.1896  & &  \\
                  &   & (0.021)  &  &  (0.001) &  &  (0.141)&  &\\
$\text{VIX}Loss$&   & 10.20  &  &10.11  &   & 10.81  & & 10.32  \\
$OptionLoss$&   & 6.81  &    &7.21  & &10.65 & &15.07  \\
\multicolumn{9}{@{\extracolsep{\fill}}l}{$\textbf{Notes:}$ Those $\text{VIX}Loss$ and $OptionLoss$ are reported in percentages.}\\                                                                                          
\hline
\end{tabular*}
\end{table}

\subsection{Parameter Estimates and Preliminary Analysis}

Table \ref{Table 3} shows parameters of all models have small standard deviation and hence are stable. Taking a closer look at the estimates, the models with FSV yield very different volatility dynamics than the models without it. Given that estimation results, we'd like to draw the conclusion. First, at last two rows of Table \ref{Table 3}, it presents $\text{VIX}Loss$ and $OptionLoss$ of four models. That can be regarded as a comparison metric. We observe that $\text{VIX}Loss$ and $OptionLoss$ in FSV-type models are more lower than other models, which means FSV models can be more flexible to fit market volatility changes. Specifically, the gradient-based minimization algorithm is used to minimize the objective function $\text{VIX}Loss$ and $OptionLoss$ by running so many iterations until the objective functions are hardly going down. The lower the $\text{VIX}Loss$ and $OptionLoss$ imply the better capability of fitting market VIX index and options. Both information criteria offer decisive rankings of the models: FSV-AJ $>$ FSV-DJ $>$ 3/2-SVJ $>$ HSV. It can be concluded that free stochastic parameter $\alpha$ presents considerable advantages in pricing VIX derivatives.

Second, the FSV-type models ($0.8662\le\sigma_{\text{FSV}}\le0.9115$) have lower volatility of volatility than the HSV and 3/2-SVJ ( $\sigma_{\text{HSV}}=1.0880, \sigma_{\text{3/2-SVJ}}=11.0750$) and hence are more steady. Third, the jump-related parameters ($\lambda_1, \mu_1, \lambda_2, \mu_2$) in FSV-type are significant, suggesting that the underlying asset cannot reject the jump component. On the other hand, let us look at jumps with respective setting. Table \ref{Table 3} contains the frequencies and sizes of upward and downward jumps in the FSV-AJ and FSV-DJ models. Downward jumps take place about 0.065 times per day with an average size of about 0.123, regardless of whether upward jumps are included. It can be seen that in FSV-AJ model, downward jumps have a higher occurrence rate and a larger size than upward jumps; that is, $\lambda_2>\lambda_1$ and $\abs{\mu_2}>\mu_1$. This result also implies that upward and downward jumps should be rejected to be normally distributed together. Overall, allowing for free stochastic parameter $\alpha$ and jumps with respective setting make a large difference in the estimation of the volatility dynamics of the VIX, which in turn will have a large effect on the pricing of VIX derivatives.  

To get a sense of the capability of each model capturing features of the decreasing pattern in the implied volatility, we back out model's implied-volatility series from the Black-Scholes formula by taking the model-determined prices as input and plot those of the ATM call options respectively with different maturities as reflected in Fig.~\ref{Figure 2}. The three panels in Fig.~\ref{Figure 2} shows that all of the four models-HSV (dashed-dotted purple line), 3/2-SVJ (dotted orange line), FSV-DJ (dashed blue long line), and FSV-AJ (dashed blue short line)-do a good job of capturing the decreasing pattern in the observed implied volatility (red dots). The FSV-type's implied volatility pattern fit the market skew quite well across different maturities, while HSV and 3/2-SVJ are no so close to it.

\begin{figure}[htpb]
\caption{\\Graph of VIX's ATM options implied volatility with different maturities. Implied volatilities on March 1, 2016, March 7, 2016 and March 11, 2016 are computed using the market prices and the model-determined prices as inputs to the inverse Black-Scholes formula to obtain the MarketIV and corresponding ModelIV. Only three days are chosen to present.}
\label{Figure 2}
\subfigure[3/1/2016]{\label{3/1/2016}\includegraphics[width=0.32\textwidth]{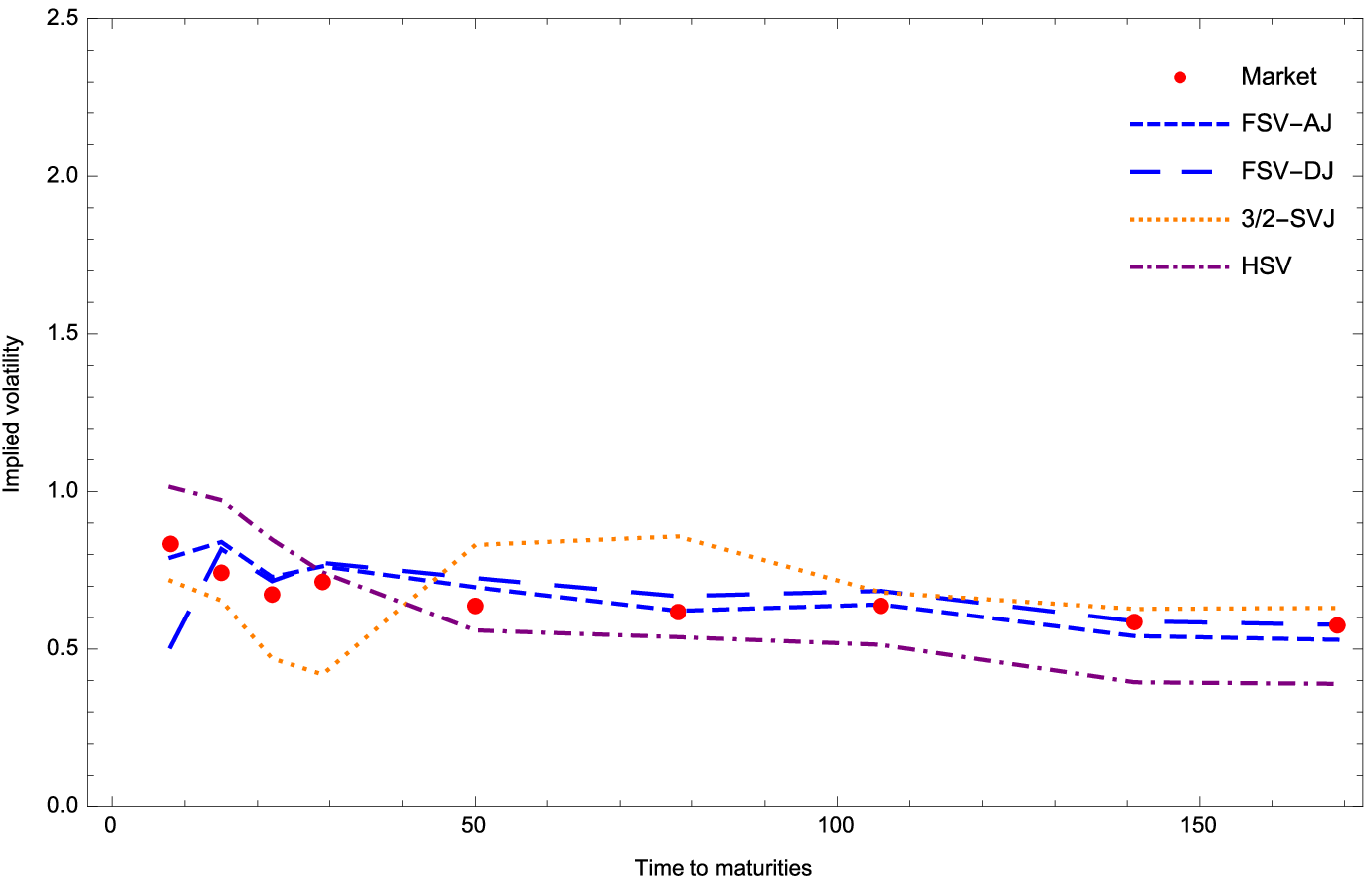}}
\subfigure[3/7/2016]{\label{3/7/2016}\includegraphics[width=0.32\textwidth]{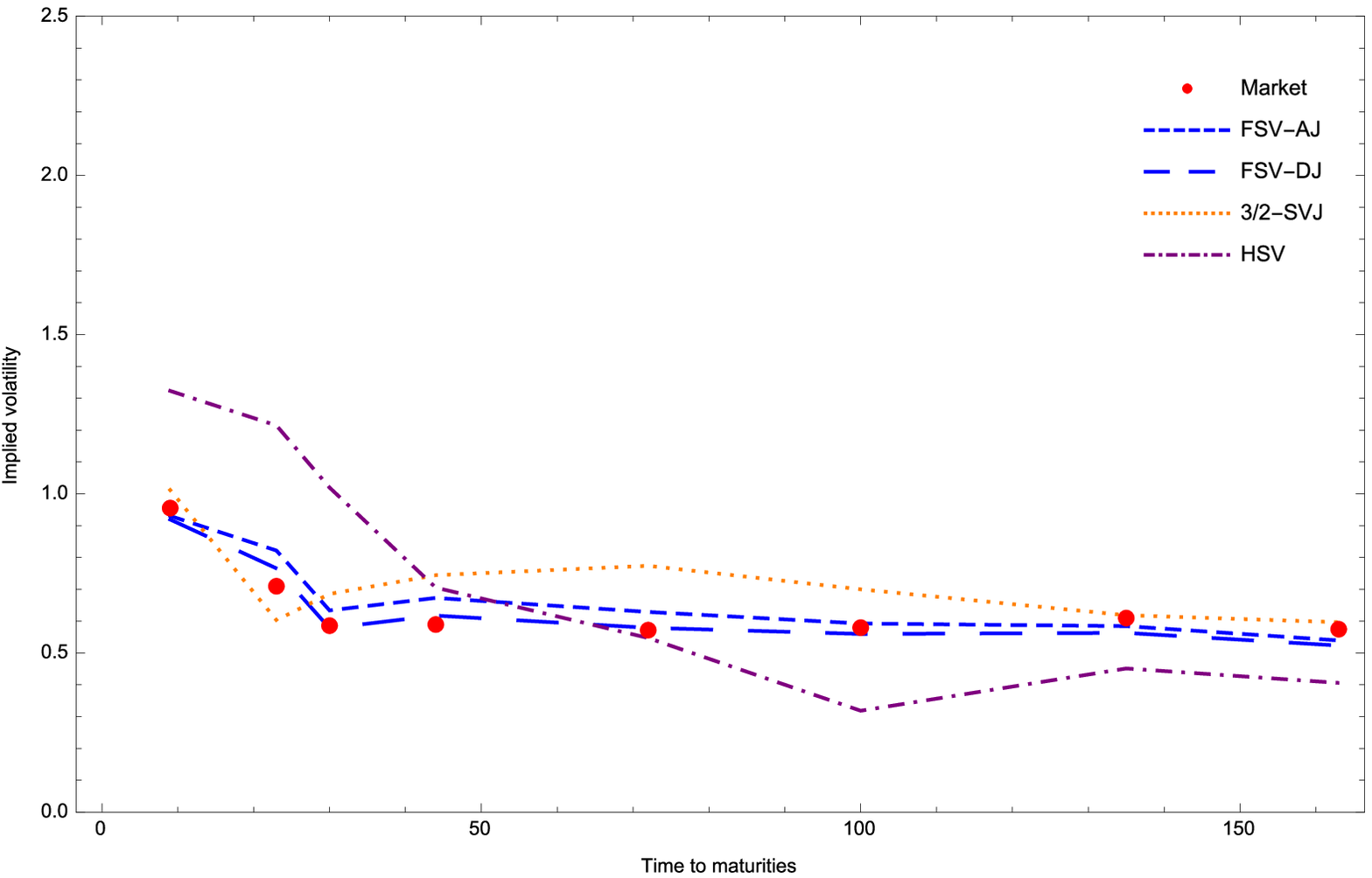}}
\subfigure[3/11/2016]{\label{3/11/2016}\includegraphics[width=0.32\textwidth]{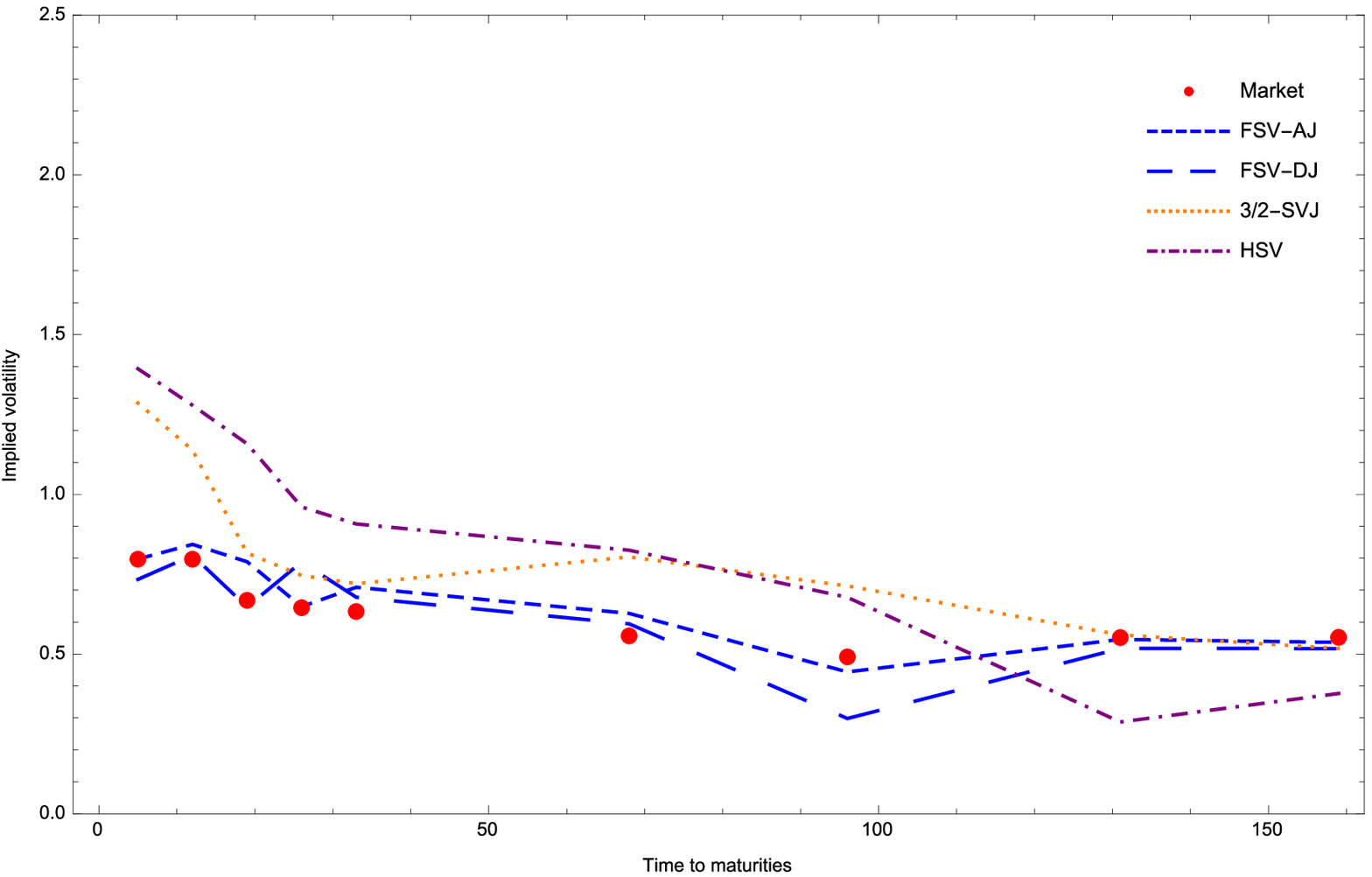}}
\end{figure}

\section{Pricing performance}
Our performance analysis forces on three main comparisons. First, we compare the FSV-type models with other models to test the importance of allowing for free stochastic volatility. Second, we compare the HSV model with the others model to investigate whether the addition of jumps can make an incremental improvement. The third comparison is made between the FSV-AJ model and the FSV-DJ model in order to examine the effects of including upward jumps once free stochastic volatility and downward jumps are included.

To facilitate our analysis, we compare the models by reporting the following three measures of performance:

(1)\quad the ARPE;

(2)\quad the average relative bid-ask error (ARBAE); and

(3)\quad the mean absolute error (MAE):
\begin{align}
\text{ARPE}&=\frac{1}{N}\sum^{N}_{i=1}\frac{\abs{Q_i^{Mid}-Q_i^{Model}}}{Q_i^{Mid}}\\
\text{MAE}&=\frac{1}{N}\sum_{i=1}^{N}\abs{Q^{Model}_i-Q_i^{Mid}}\\
\text{ARBAE}&=\frac{1}{N}\sum^{N}_{i=1}\frac{\text{max}\{(Q^{Model}_i-Q^{Ask}_i)^+,(Q_i^{Bid}-Q_i^{Model})^+\}}{Q_i^{Mid}}.
\end{align}
The ARPE error measure reports the average pricing error in per cent, while the ARBAE measures the average error of model prices that fall outside the bid-ask spread. MAE stands for the sample average absolute difference between the market price and the model price. We only compute the ARBAE measure for the options, as we don't have data available on bid and ask prices for futures. Notice that $Q_i$ is used to denote quotes on calls and futures.

Table \ref{Table 4} presents in-sample pricing performance across the different models. From left to right, the groups of columns show the pricing errors as measured by ARPE, ARBAE and MAE respectively. Each group contains three types of maturities. The main results for all futures and options are shown in the top two panels, and the options are broken down into calls and by moneyness levels in the lower panels. On the whole, FSV-type show the best in-sample performance, being capable of fitting market prices as well as producing volatility skew, and 3/2-SVJ on the other hand, gives competitive performance in consideration of its fewer parameters requirements and good qualifying performance especially in ITM option. To see it, the futures pricing errors and pairwise model comparisons are shown in Panel A of Table \ref{Table 4}, the FSV-type models have in-sample APRE of 0.66-0.67, while the 3/2-SVJ model and HSV model get 0.8 and 3.55 ARPE respectively. Due to both FSV-type and 3/2-SVJ have near lower ARPE, we change to check out the MAE metrics which suggest that the futures pricing performance of the FSV-type model is slightly better than other models in the in-sample test. Given 3/2-SVJ has less parameters than FSV-type, we'd like to draw the conclusion that 3/2-SVJ performed better than FSV-type for in-sample pricing future.

With respect to options, the pricing errors and pairwise model comparisons are shown in Panel B-E of Table \ref{Table 4}. From total statistics of each Panels, the FSV-type show the better in-sample performance than other models. Every error measures between FSV-AJ and FSV-DJ are getting close, which indicate the critical modeling features of free stochastic volatility. The FSV-type models have an in-sample ARPE 6.81-7.21, while the 3/2-SVJ model and HSV model have an in-sample ARPE 10.65 and ARPE 15.07 respectively. In addition, FSV-types model have an in-sample ARBAE of 3.81-4.06, while 3/2-SVJ and HSV models have an in-sample ARBAE of 7.35 and 11.41 respectively larger than FSV-type's. It can be concluded that FSV-types show the lowest average error of model prices falling outside the bid-ask spread. Finally, regarding the reported ARPE value in Panel C, we find that 3/2-SVJ is not appropriate to price OTM options.

{\tiny 
\begin{table}
\renewcommand\arraystretch{1.5}
\caption{\\Pricing performances across different models: In-sample. This table shows the in-sample performance metrics across the different models. The metric ARPE error measure reports the average pricing error in per cent. The metric ARBAE measures the average error of model prices that fall outside the bid-ask spread. The metric MAE stands for the sample average absolute difference between the market price and the model price. The ARPE and ARBAE performance measures are reported in percentages. The moneyness is defined as $d=\text{VIX}/K$ and K is the strike of option contract. Short-term contracts are those with no more than one month to maturity, middle-term contracts are those not only more than one month but also no more than three months to maturity and long-term contracts are those more than three months to maturities. }
\label{Table 4}
\centering
\begin{tabular*}{\textwidth}{@{\extracolsep{\fill}}lllllllllllllll}
\hline
          &    \multicolumn{4}{c}{ARPE} & &\multicolumn{4}{c}{ARBAE}  & &\multicolumn{4}{c}{MAE} \\  \cline{2-5}\cline{7-10}\cline{12-15} 
              &Short   &   Middle   &  Long &   Total & &Short   &   Middle   &  Long &   Total & &Short   &   Middle   &  Long &   Total \\ \hline
\multicolumn{15}{@{\extracolsep{\fill}}l}{\textbf{Panel A: All future}}\\
HSV      & 0.91   &  2.07  &  4.45   & 3.55 &&N/A&N/A&N/A&N/A&&    0.16  & 0.41  &0.94  &0.74\\
3/2-SVJ & 2.32   & 0.85   &  0.55   &0.80 &&N/A&N/A&N/A&N/A&&    0.41   & 0.16  &0.11  &0.16\\
FSV-AJ  &  0.63  & 0.52   &  0.73   &0.66 &&N/A&N/A&N/A&N/A&&    0.11   & 0.10  &0.15  &0.13\\
FSV-DJ  &  0.66  & 0.46   &  0.73   &0.67 &&N/A&N/A&N/A&N/A&&    0.12   & 0.09  &0.16  &0.14\\
\multicolumn{15}{@{\extracolsep{\fill}}l}{\textbf{Panel B: All options}}\\
HSV      & 13.91   &  14.5  &  17.02   & 15.07  &&9.29&11.95&13.94&11.41&&    0.68  & 0.84  &0.84  &0.77\\
3/2-SVJ & 7.97   & 8.07   &  16.02   & 10.65   &&3.95&5.84&12.97&7.35&&    0.37   & 0.35  &0.57  &0.43\\
FSV-AJ  &  6.43  & 8.35   &  6.30   & 6.81      &&2.91&5.90&3.64&3.81&&    0.37   & 0.39  &0.29  &0.35\\
FSV-DJ  &  7.56  & 8.61   &  5.78   & 7.21      &&3.59&6.16&3.28&4.06&&    0.44   & 0.41  &0.26  &0.37\\
\multicolumn{15}{@{\extracolsep{\fill}}l}{\textbf{Panel C: OTM options ($d<-0.1$)}}\\
HSV      & 40.24   &  14.89  &  14.01   & 15.38  &&33.76&12.34&10.56&11.97&&    0.98  & 0.40  &0.45  &0.46\\
3/2-SVJ & 26.6   & 16.01   &  20.07   & 19.53   &&20.60&13.41&16.54&16.08&&    0.65   & 0.41  &0.56  &0.54\\
FSV-AJ  &  13.90  & 13.68   &  6.05   & 7.98      &&7.49&11.07&3.09&4.93&&    0.33   & 0.36  &0.18  &0.22\\
FSV-DJ  &  13.77  & 14.00   &  5.53   & 7.76      &&7.29&11.44&2.78&4.77&&    0.33   & 0.37  &0.16  &0.21\\
\multicolumn{15}{@{\extracolsep{\fill}}l}{\textbf{Panel D: ATM options ($-0.1\le d\le0.1$)}}\\
HSV      & 23.01   &  8.40  &  17.40   & 17.46  &&17.12&5.42&14.64&13.30&&    0.71  & 0.35  &0.87  &0.67\\
3/2-SVJ & 8.42   & 4.00   &  5.09   & 6.23      &&3.83&1.76&2.69&2.94&&             0.26   & 0.17  &0.26  &0.24\\
FSV-AJ  & 6.02  & 6.41   &  4.20   & 5.55      &&2.17&3.60&1.93&2.47&&          0.18   & 0.24  &0.20  &0.20\\
FSV-DJ  & 6.18  & 6.20   &  3.81   & 5.45      &&2.43&3.43&1.64&2.44&&            0.19   & 0.23  &0.18  &0.20\\
\multicolumn{15}{@{\extracolsep{\fill}}l}{\textbf{Panel E: ITM options ($0.1< d$)}}\\
HSV      & 10.91   &  16.85  &  29.45 & 14.09  &&6.70&14.40&27.63&10.46&&    0.66  & 1.27  &2.50  &0.98\\
3/2-SVJ & 7.17   & 5.44   &  10.35   & 7.11      &&3.35&3.42&8.67&3.90&&             0.37   & 0.39  &0.91  &0.43\\
FSV-AJ  & 6.24  & 6.27   &  9.57   & 6.58      &&2.89&4.05&7.76&3.62&&          0.41   & 0.47  &0.83  &0.47\\
FSV-DJ  & 7.63  & 6.08   &  8.96   & 7.55      &&3.70&4.43&7.15&4.20&&            0.49   & 0.50  &0.77  &0.52\\                                                                       
\hline
\end{tabular*}
\end{table}
}

Now that the in-sample fit is increasingly better from HSV, 3/2-SVJ, FSV-DJ and FSV-AJ, one may argue that the outcome can be biased due to the larger number of parameters and the over-fitting to the data. Moreover, a model that performs well in fitting option prices may have poor predictive qualities. Given this concerns, we design the out-of-sample test by using the parameters estimated in Table \ref{Table 3} as inputs to compute the model-based option prices on March 21, 2016 to March 31, 2016 and report the corresponding pricing errors in Table \ref{Table 5}. Let's get a close look at futures. The pricing errors and pairwise model comparisons are shown in Panel A of Table \ref{Table 5}. The FSV-AJ model has an out-of-sample ARPE of 2.76, while the 3/2-SVJ and HSV models have out-of-sample ARPE 3.02 and 5.59 respectively. With respect to options, the pricing errors and pairwise model comparisons are shown in Panel B-H of Table \ref{Table 5}. Totally, the FSV-type models have an out-of-sample ARPE of 8.86-9.97, while the 3/2-SVJ and HSV models have an out-of-sample ARPE of 14.05 and 16.17 respectively. 3/2-SVJ also performs not well in pricing OTM options, and on the other hand, is good at pricing ITM options. According to those results, FSV-type models still show best out-of-sample performance, being capable of generating fewer pricing error in both futures and options, with competitive 3/2-SVJ model in pricing ITM options. Finally, it can be learned from ARBAE that FSV-AJ shows the lowest average error of model prices falling outside the bid-ask spread.

\begin{figure}[htpb]
\caption{\\Graph of the VIX future price as a function of time to maturities (days). VIX futures are computed by four models with respect to 9 different maturities, using the parameters estimated in Table \ref{Table 3}.}
\label{Figure 3}
\subfigure[FSV-AJ]{\label{FSV-AJ}\includegraphics[width=0.24\textwidth]{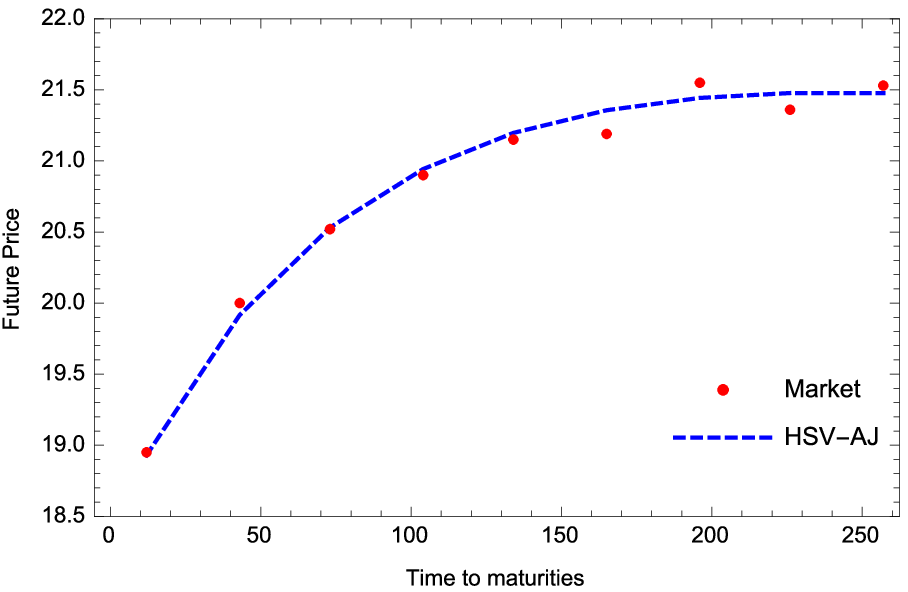}}
\subfigure[FSV-DJ]{\label{FSV-DJ}\includegraphics[width=0.24\textwidth]{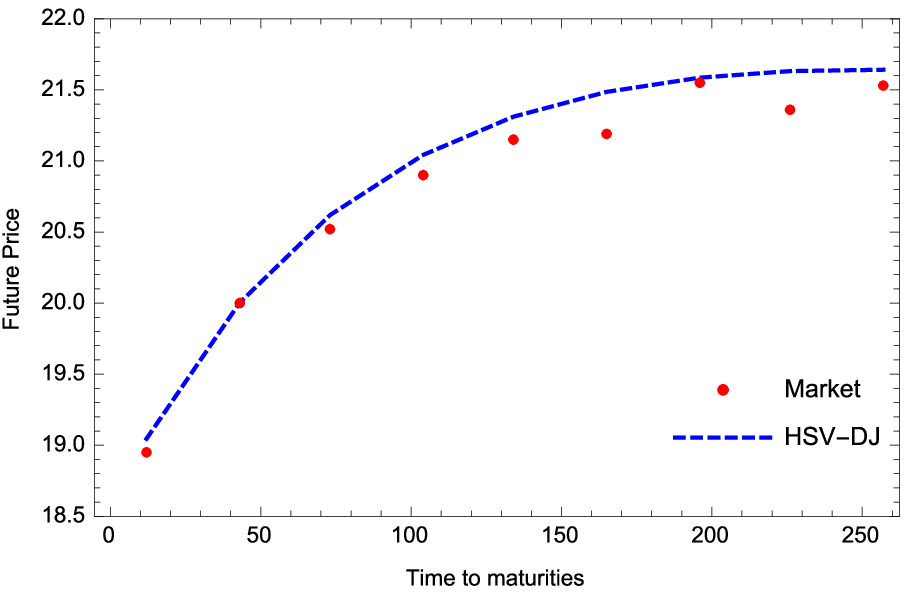}}
\subfigure[3/2-SVJ]{\label{3/2-SVJ}\includegraphics[width=0.24\textwidth]{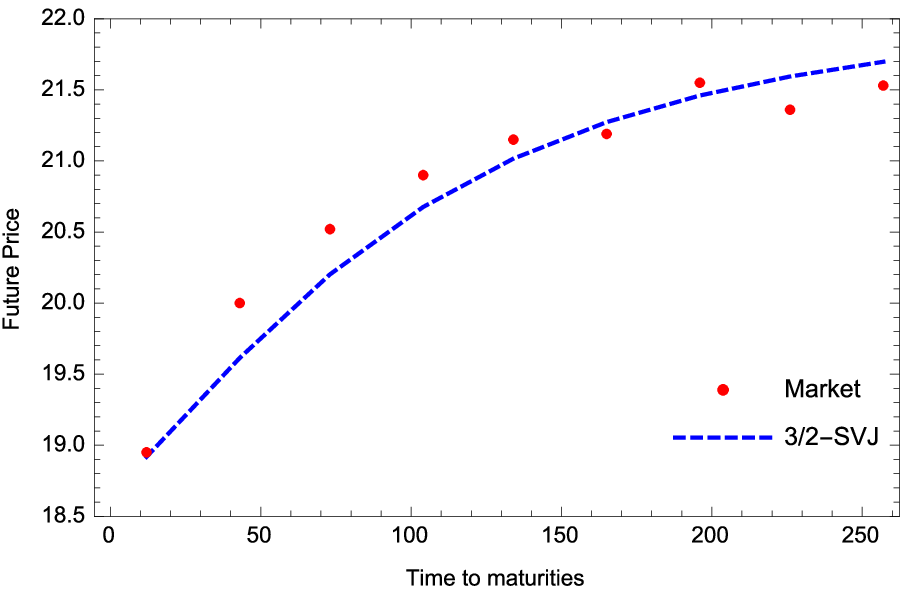}}
\subfigure[HSV]{\label{HSV}\includegraphics[width=0.24\textwidth]{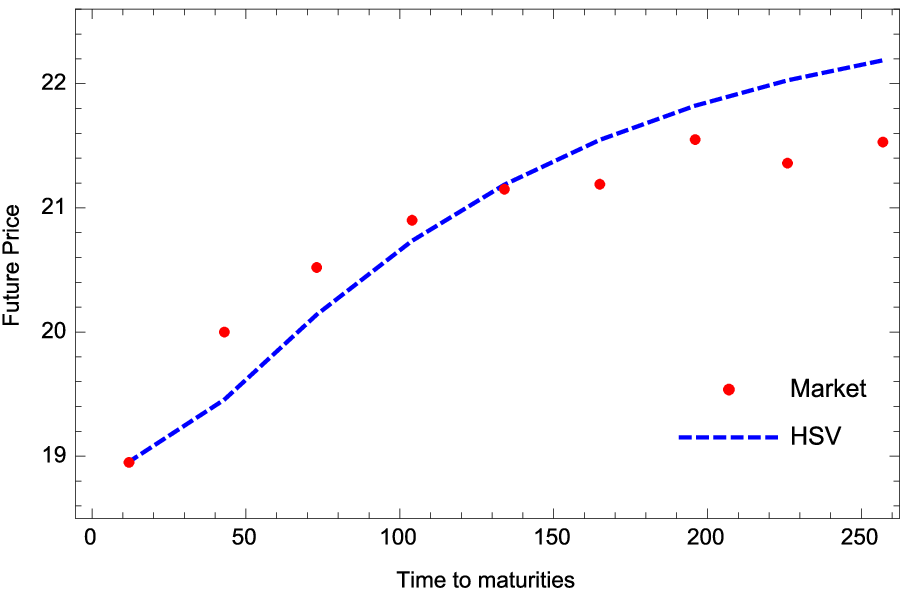}}
\end{figure}

{\tiny 
\begin{table}
\renewcommand\arraystretch{1.5}
\caption{\\Pricing performances across different models: Out-of-sample. This table shows the out-of-sample performance metrics across the different models. The metric ARPE error measure reports the average pricing error in per cent. The metric ARBAE measures the average error of model prices that fall outside the bid-ask spread. The metric MAE stands for the sample average absolute difference between the market price and the model price. The ARPE and ARBAE performance measures are reported in percentages. The moneyness is defined as $d=\text{VIX}/K$ and K is the strike of option contract. Short-term contracts are those with no more than one month to maturity, middle-term contracts are those not only more than one month but also no more than three months to maturity and long-term contracts are those more than three months to maturities. }
\label{Table 5}
\centering
\begin{tabular*}{\textwidth}{@{\extracolsep{\fill}}lllllllllllllll}
\hline
          &    \multicolumn{4}{c}{ARPE} & &\multicolumn{4}{c}{ARBAE}  & &\multicolumn{4}{c}{MAE} \\  \cline{2-5}\cline{7-10}\cline{12-15} 
              &Short   &   Middle   &  Long &   Total & &Short   &   Middle   &  Long &   Total & &Short   &   Middle   &  Long &   Total \\ \hline
\multicolumn{15}{@{\extracolsep{\fill}}l}{\textbf{Panel A: All future}}\\
HSV      & 4.79   &  3.33  &  6.52   & 5.59 &&N/A&N/A&N/A&N/A&&    0.78  & 0.61  &1.31  &1.09\\
3/2-SVJ & 8.50   & 2.95   &  2.09   &3.02 &&N/A&N/A&N/A&N/A&&    1.39   & 0.54  &0.42  &0.56\\
FSV-AJ  &  5.51  & 3.33   &  2.09   &2.76 &&N/A&N/A&N/A&N/A&&    0.94   & 0.63  &0.41  &0.52\\
FSV-DJ  &  2.68  & 2.49   &  4.96   &4.14 &&N/A&N/A&N/A&N/A&&    0.45   & 0.46  &1.00  &0.81\\
\multicolumn{15}{@{\extracolsep{\fill}}l}{\textbf{Panel B: All options}}\\
HSV      & 6.71   &  16.33  &  17.23   & 16.17  &&3.16&13.93&12.58&12.42&&    0.41  & 0.93  &1.03  &0.95\\
3/2-SVJ & 1.24   & 12.64   &  16.52   & 14.05   &&0&10.60&12.46&10.92&&    0.08   & 0.46  &0.55  &0.48\\
FSV-AJ  &  7.14  & 10.82   &  7.82   & 8.86      &&2.86&8.44&4.19&5.65&&    0.45   & 0.48  &0.42  &0.45\\
FSV-DJ  &  12.92  & 13.86   &  7.10   & 9.97    &&8.20&11.42&4.01&7.00&&    0.82   & 0.57  &0.33  &0.45\\
\multicolumn{15}{@{\extracolsep{\fill}}l}{\textbf{Panel C: OTM options ($d\le-0.1$)}}\\
HSV      & N/A   &  21.86  &  10.37   & 12.92  &&N/A&18.71&5.49&8.43&&    N/A  & 0.56 &0.39  &0.43\\
3/2-SVJ & N/A   & 33.34   &  24.83   & 26.72  &&N/A&30.06&19.39&21.77&& N/A   & 0.89  &0.75  &0.78\\
FSV-AJ  &  N/A  & 22.13   &  6.20   & 9.74      &&N/A&18.85&2.22&5.92&&    N/A   & 0.60  &0.20  &0.29\\
FSV-DJ  &  N/A  & 28.96   &  7.69   & 12.42    &&N/A&25.68&4.21&8.98&&    N/A   & 0.80  &0.24  &0.36\\
\multicolumn{15}{@{\extracolsep{\fill}}l}{\textbf{Panel D: ATM options ($-0.1< d<0.1$)}}\\
HSV      & N/A   &  7.83  &  24.49   & 18.05  &&N/A&5.46&19.62&14.15&&    N/A  & 0.41  &1.52  &1.09\\
3/2-SVJ & N/A   & 8.81   &  3.86   & 5.77      &&N/A&6.76&1.88&3.77&&          N/A   & 0.42  &0.22  &0.30\\
FSV-AJ  & N/A  & 7.04   &  8.30   & 7.82      &&N/A&4.63&5.03&4.88&&          N/A  & 0.33  &0.53  &0.45\\
FSV-DJ  &N/A   & 12.31  & 4.83   &  7.72       &&9.75&2.29&5.17&2.44&&            N/A   & 0.59  &0.30  &0.41\\
\multicolumn{15}{@{\extracolsep{\fill}}l}{\textbf{Panel E: ITM options ($0.1\le d$)}}\\
HSV      & 6.71   &  17.24  &  33.00 & 19.21  &&3.16&15.25&29.61&16.54&&    0.41  & 1.38  &2.77  &1.54\\
3/2-SVJ & 1.24   & 3.20   &  3.19   & 2.80      &&0&1.84&1.35&1.33&&             0.08   & 0.24  &0.26  &0.21\\
FSV-AJ  & 7.14  & 6.44   &  13.38   & 8.39      &&2.86&4.56&10.58&5.79&&          0.45   & 0.49  &1.13  &0.65\\
FSV-DJ  & 12.92  & 6.40   &  8.24   & 8.20      &&8.20&4.48&5.84&5.59&&            0.82   & 0.45  &0.70  &0.59\\                                                                      
\hline
\end{tabular*}
\end{table}
}

To further gauge and analyze the effects of including FSV parameter $\alpha$ on pricing futures and options, we now focus on investigation and comparison between FSV-type models with 3/2-SVJ and HSV models. With respect to futures, the FSV-AJ model has in-sample ARPE of 0.66 and has out-of-sample ARPE of 2.76, while both 3/2-SVJ and HSV can't fit the market future price quite well. The in-sample MAE metrics of FSV-type models have been no more than 0.14, which is lower than other models' MAE. Taking a closer look at the short maturities, we find FSV-type models always do better performance, suggesting that the flexible FSV parameter $\alpha$ can take effect on fitting frequently fluctuating the volatility due to short maturities. To get a sense of the capability of each model capturing future price, we plot those of the VIX future value as a function of time to maturities under FSV-AJ, FSV-DJ, 3/2-SVJ and HSV models, and presented in Fig.~\ref{Figure 3}. VIX  futures are computed by using parameters values as given in Table~\ref{Table 3}. Here time to maturities presented in Fig.~\ref{Figure 3} are not annualized. All of this indicate that the futures pricing performance of FSV-type models are better than that of the 3/2-SVJ and HSV models, showing the importance of including free stochastic volatility.

With respect to options, we break down option pricing errors into maturities and several moneyness defined as $d=\text{VIX}/K$ which $K$ is strike of option contract. The FSV-type models have in-sample ARPE of 6.81-7.21 and MAE of 0.35-0.37, while 3/2-SVJ model has in-sample ARPE of 10.65 and MAE of 0.43. HSV model perform not quite well in pricing options, having in-sample ARPE of 15.07 and MAE of 0.77. Taking our analysis one step further. Based on Panel C of in-sample and out-of-sample presented above, we conclude that FSV-type models greatly improve the pricing of OTM options, especially comparing with 3/2-SVJ model. To see ITM options presented in Panel E of Table \ref{Table 5}, however, all the models except 3/2-SVJ generate larger percentage errors in the ITM options of out-of-sample test, which shows that 3/2-SVJ is quite competent in ITM options pricing with the advantage of its less parameters and simplicity. To get a sense of the capability of each model capturing option price, we plot those of the VIX option value as a function of 38 strikes, from 10 to 70, under time to maturities 28 days, and presented in Fig.~\ref{Figure 10}. FSV-type outperform other models in OTM options while 3/2-SVJ indeed does good job in ITM options. After comparing all in-sample ARPE with out-of-sample ARPE in each models, FSV-type models perform more stable in pricing futures and options than other models. To sum up, including free volatility parameter $\alpha$ in our model plays a crucial role in fitting the prices of futures and options.

\begin{figure}[htpb]
\caption{\\Graph of the VIX option price as a function of strikes. VIX options are computed by four models with respect to 38 strikes, using the parameters estimated in Table \ref{Table 3}. Here the strikes are divided by 100.}
\label{Figure 10}
\subfigure[FSV-AJ]{\label{FSV-AJ}\includegraphics[width=0.24\textwidth]{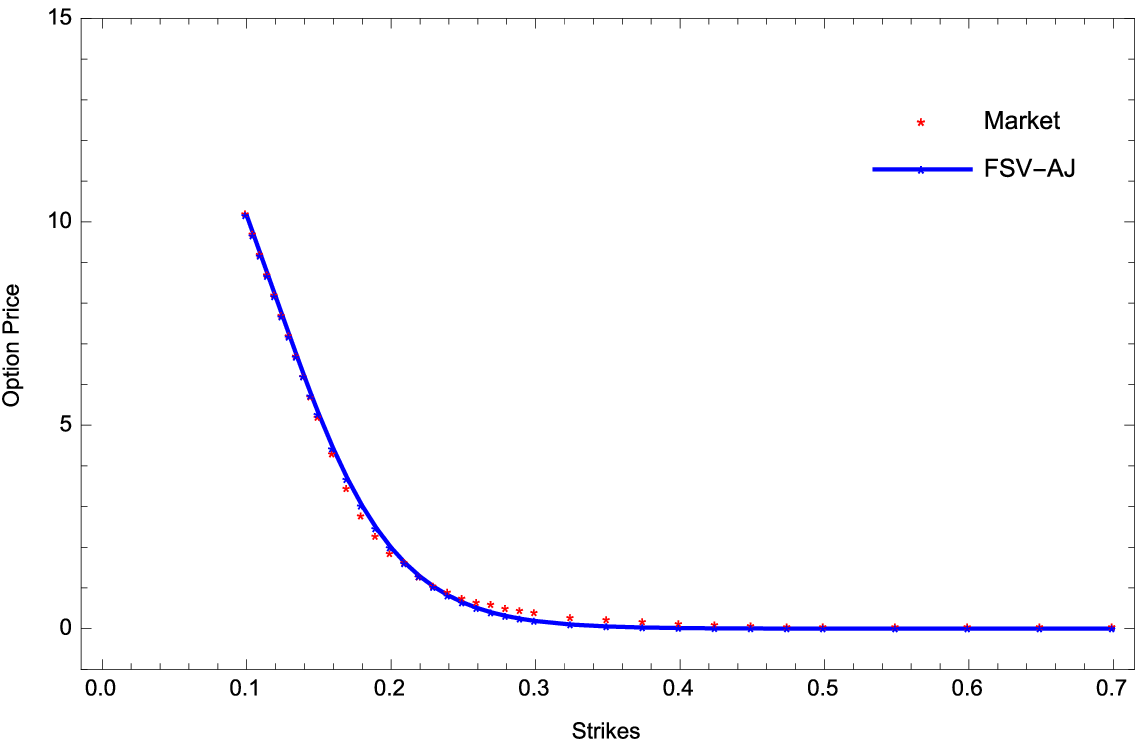}}
\subfigure[FSV-DJ]{\label{FSV-DJ}\includegraphics[width=0.24\textwidth]{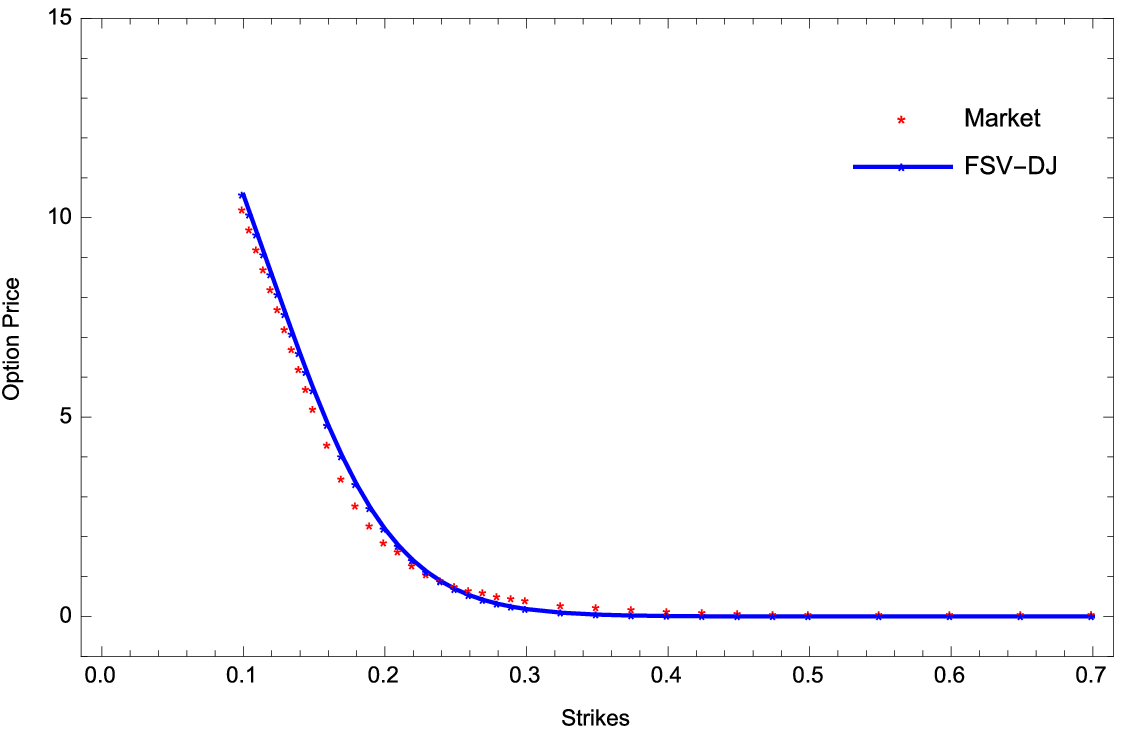}}
\subfigure[3/2-SVJ]{\label{3/2-SVJ}\includegraphics[width=0.24\textwidth]{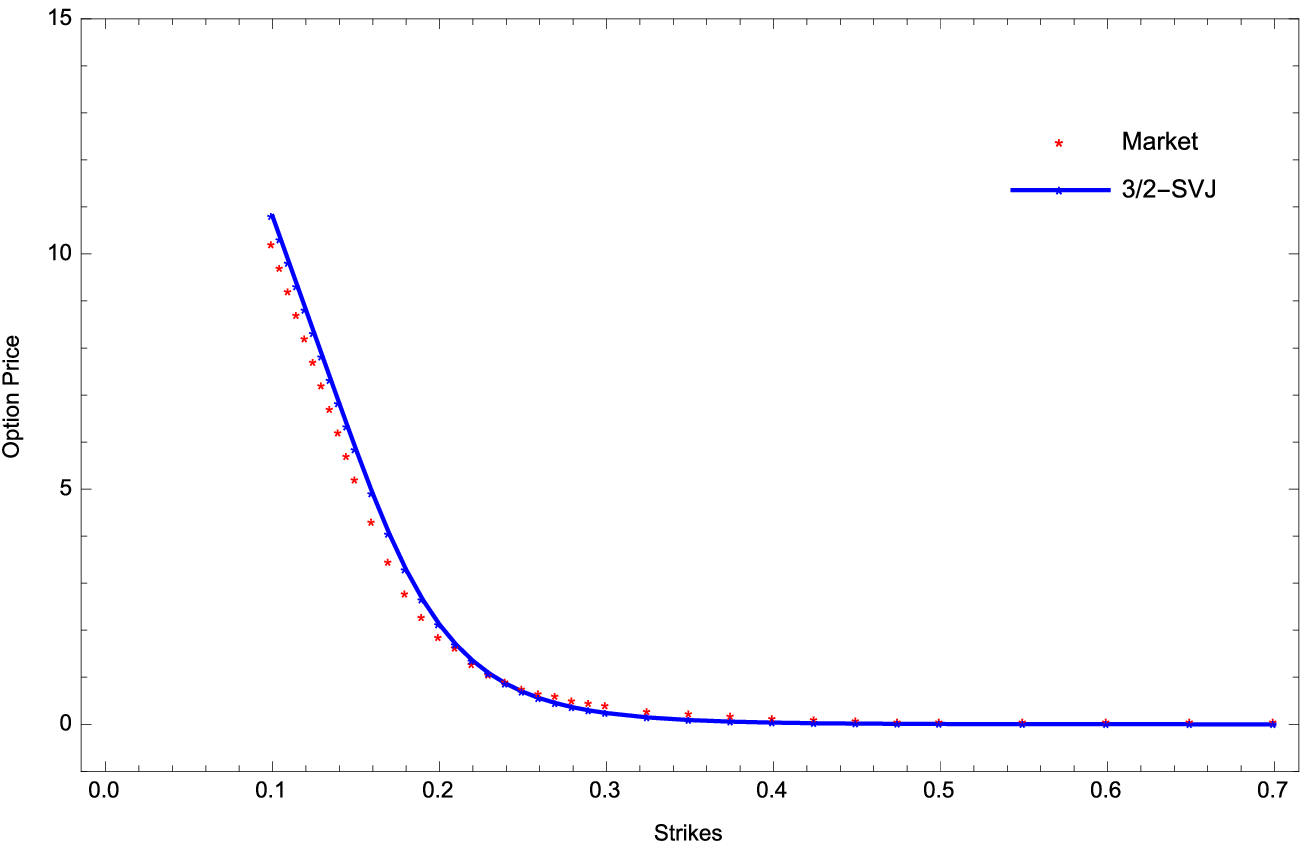}}
\subfigure[HSV]{\label{HSV}\includegraphics[width=0.24\textwidth]{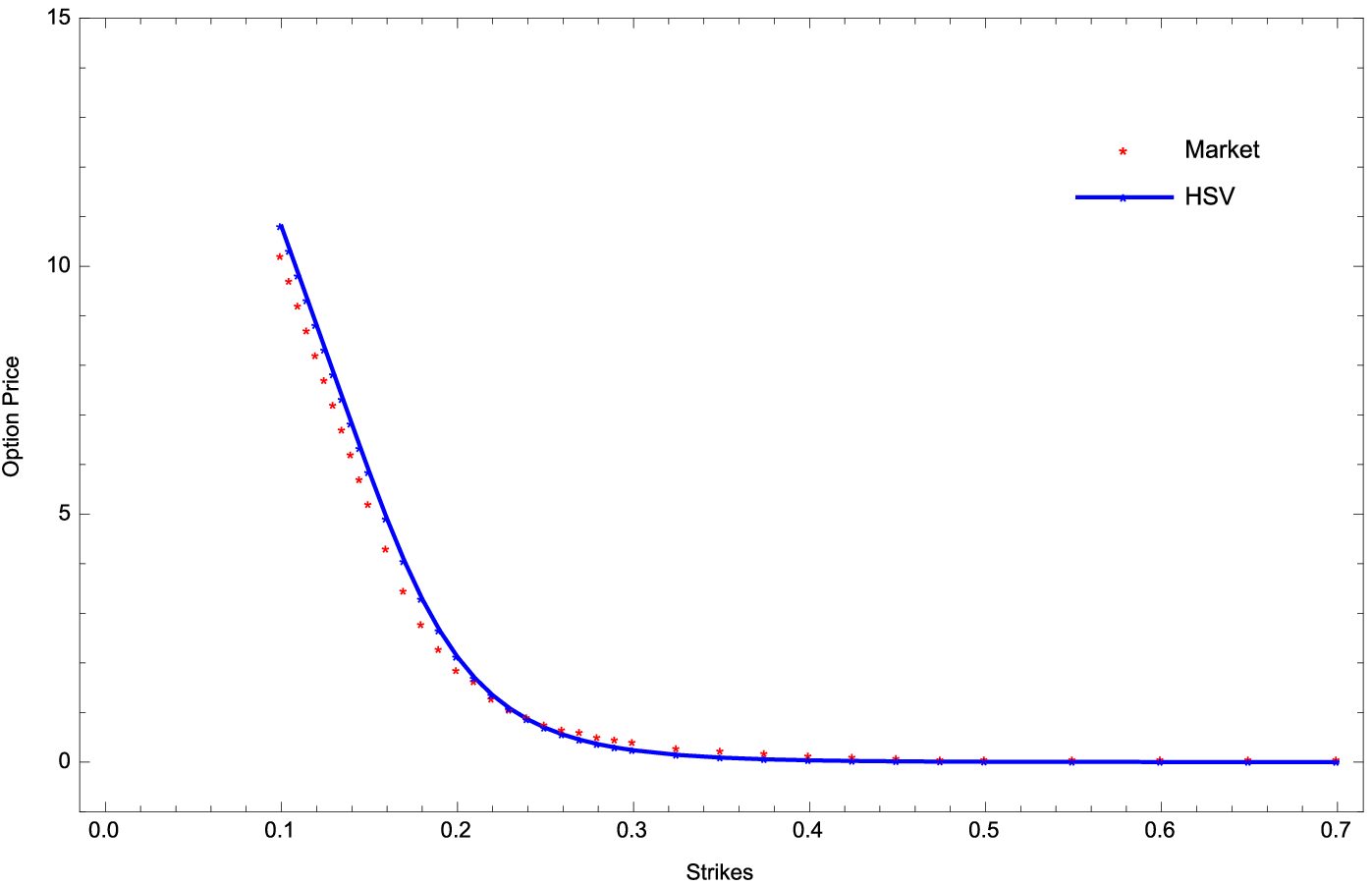}}
\end{figure}

Lastly, we evaluate the incremental effect of upward and downward jumps on the pricing of futures and options. We now get close look at FSV-AJ and FSV-DJ pricing errors. Panel A to E of Table \ref{Table 4} and \ref{Table 5} show the in-sample and out-of sample ARPE and MAE. On the whole, FSV-AJ shows the better in-sample and out-of-sample performance in futures and OTM options. Heuristically, it seems that including upward jump in FSV-AJ contribute VIX to going down to make OTM options back to ITM options and hence FSV-AJ does good job in OTM options pricing. It also can be seen in Fig.~\ref{Figure 10} that FSV-AJ is more closed to market price at lower strike ahead of 20. Although the FSV-AJ model greatly improves the pricing of OTM call options in out-of-sample tests, there is no significant pricing difference between the FSV-AJ and FSV-DJ with respect to the pricing of in-sample futures and option and out-of-sample ITM options. It can be concluded that including upward jumps cannot be rejected to the extent free stochastic volatility and downward jumps are already accounted for.

\section{Conclusion}
In this paper, we have introduced an efficient and flexible model, FSV model, that is an extension of the famous Heston model and 3/2 plus jumps model in a unified framework by keeping their analytical tractability, in the sense that we first derive and establish the quasi-closed form solution for future and call option price. The models have three innovative features which are wholly different from those previous appearing in the literature. First, we relax the power parameter $\alpha$ of instantaneous variance of CIR process, instead of fixing $\alpha$ to be some fixed value as usual. Hence we set free FSV parameter $\alpha$, letting the data speak as to its direction. Furthermore, we also provide non-explosion condition $\frac{2\kappa\theta}{\sigma^2}>1-\alpha$ and prove the discounted price stock is a true martingale. Second, our paper first testified that FSV-type models can be more closed to S\&P500 by using GMM technique as test. Results show that $\alpha$ of the indices can imply different volatility fluctuation in different periods, in other word, another subclass of FSV-type is preferred and newly added $\alpha$ can help to identify the equity distribution. Third, we separate out the roles of upward and downward jumps in equity, instead of just getting them together in a normal distribution, to better consider the reality that investors react differently to good and bad shocks. 

According to the in-sample measure and out-of-sample forecast, adding the FSV parameter $\alpha$ and jumps features to the model can greatly improve its performance, especially in pricing OTM options. Model with jumps outperforms in pricing performance and we find that upward jumps play more crucial feature in pricing OTM option. On the other hand, although all of the models considered do a nice job of capturing the decreasing pattern in the implied volatility, only FSV-AJ and FSV-DJ models successfully match the flat pattern in the implied skewness, consistent with the results that FSV parameter $\alpha$ and jumps are both important for pricing the VIX derivatives.   

Finally, there are several works remained which have not been discussed in this paper and we hope to figure them in the future. For example, to hedge European-style options in discrete time, Fourier cosine series expansions and characteristic function of underlying asset return process are hoped to provide. Furthermore, whether FSV-type models can accurately price S\&P500 option and realized-variance derivatives is still unknown. Adding jumps in instantaneous variance process should be worth exploring. Those topics also deserve more future research and will be the subject of future work.

\begin{appendix}\label{appendix1}
\section{}
\label{app}
To derive $\mathbb{E}^{\mathbb{Q}}[\varepsilon_{t+1}]$ and $\mathbb{E}^{\mathbb{Q}}[\varepsilon_{t+1}^{2}]$, we split the Brownian motion $W$ into $Z$ and its orthogonal part $Z^{\bot}$ and obtain
\begin{align*}
\varepsilon_{t+1}&=V_{t}^{\alpha} W(\Delta t)\\
&= V_{t}^{\alpha} \left[\rho Z(\Delta t)+\sqrt{1-\rho^2}Z^{\bot}(\Delta t)\right].
\end{align*}
Taking expectation to two sides, it follows that
\begin{align*}
\mathbb{E}^{\mathbb{Q}}[\varepsilon_{t+1}]&=\rho\mathbb{E}^{\mathbb{Q}}\left[V_t^{\alpha}Z(\Delta t)\right]+\sqrt{1-\rho^2}\mathbb{E}^{\mathbb{Q}}\left[V_t^{\alpha}Z^{\bot}(\Delta t)\right]\\
&=\rho\mathbb{E}^{\mathbb{Q}}\left[V_t^{\alpha}Z(\Delta t)\right]+\sqrt{1-\rho^2}\mathbb{E}^{\mathbb{Q}}\left[V_t^{\alpha}\right]*\mathbb{E}^{\mathbb{Q}}\left[Z^{\bot}(\Delta t)\right]\\
&=0
\end{align*}
The first term in the last equation equals 0 due to the independent and independent increments of Brownian motion. On the other hand, the second moment of $\varepsilon_{t+1}$ is explicitly known as 
\begin{align*}
\mathbb{E}^{\mathbb{Q}}[\varepsilon^{2}_{t+1}]&=\mathbb{E}^{\mathbb{Q}}\left[^2 V_{t}^{2\alpha}W^2(\Delta t)\right]\\
&=\mathbb{E}^{\mathbb{Q}}[V_{t}^{2\alpha}\left(\rho^2Z^2(\Delta t)+(1-\rho^2)(Z^{\bot}(\Delta t))^2\right)+2\rho\sqrt{1-\rho^2}V_{t}^{2\alpha}Z(\Delta t)Z^{\bot}(\Delta t)]\\
&=\mathbb{E}^{\mathbb{Q}}[V_{t}^{2\alpha}]\Delta t+2\rho\sqrt{1-\rho^2}\mathbb{E}^{\mathbb{Q}}[V_{t}^{2\alpha}Z(\Delta t)]*\mathbb{E}^{\mathbb{Q}}[Z^{\bot}(\Delta t)]\\
&=\mathbb{E}^{\mathbb{Q}}[V_{t}^{2\alpha}]\Delta t\\
&=\Delta t \frac{\Gamma\left[2\left(\alpha+\frac{\kappa\theta}{\sigma^2}\right)\right]}{\Gamma{\left[\frac{2\kappa\theta}{\sigma^2}\right]}}\left(\frac{\sigma^4}{4}\right)^{\alpha}e^{-2\kappa\left(t\left(\alpha+\frac{\kappa\theta}{\sigma^2}\right)+\frac{V_0}{\left(e^{t\kappa}-1\right)\sigma^2}\right)}\notag\\
&\quad\times\left(\frac{\kappa}{1-e^{-t\kappa}}\right)^{\frac{2\kappa\theta}{\sigma^2}}\left(\frac{\kappa}{-1+e^{t\kappa}}\right)^{-2\alpha-\frac{2\kappa\theta}{\sigma^2}} \ _{1}F_1\left[2\left(\alpha+\frac{\theta\kappa}{\sigma^2}\right),\frac{2\kappa\theta}{\sigma^2},\frac{2V_{0}\kappa}{\left(-1+e^{t\kappa}\right)\sigma^2}\right]
\end{align*} 
where we used the remark \ref{rem3.1} by setting $\eta$ to be $2\alpha$ and the fact that $V_t$ is Markov process.
\end{appendix}

\end{document}